\newcommand{\mypar}[1]{\vspace{0.03in}\noindent{\bf #1.}}
\newtheorem{theorem}{Theorem}
\newtheorem{lemma}[theorem]{Lemma}
\title{Weight Optimization for Consensus Algorithms with Correlated Switching Topology}
\author{Du$\breve{\mbox{s}}$an Jakoveti\'c, Jo\~ao Xavier, and Jos\'e M.~F.~Moura$^{*}$
\thanks{The first and second authors are with the Instituto de Sistemas e Rob\'otica~(ISR), Instituto Superior T\'ecnico~(IST), 1049-001 Lisboa, Portugal. The first and third authors are with the Department of Electrical and Computer Engineering,
Carnegie Mellon University, Pittsburgh, PA 15213, USA (e-mail:
[djakovetic,jxavier]@isr.ist.utl.pt, moura@ece.cmu.edu, ph: (412)268-6341, fax: (412)268-3890.)}
\thanks{Work partially supported by  NSF under grant~\#~CNS-0428404, by  the Office of
Naval Research under MURI N000140710747, and by the Carnegie Mellon$|$Portugal Program under a grant of the Funda\,c\~ao de Ci$\hat{\mbox{e}}$ncia e Tecnologia~(FCT) from Portugal. Du$\breve{\mbox{s}}$an Jakovetic holds a fellowship from~FCT.}}
\begin{document}
\maketitle \thispagestyle{empty} \maketitle
\begin{abstract}
We design the weights in consensus algorithms with spatially correlated random topologies. These arise with:  1) networks with spatially correlated random link failures and 2) networks with randomized averaging protocols. We show that the weight optimization problem is convex for both symmetric and asymmetric random graphs.
With symmetric random networks, we choose the consensus mean squared error (MSE) convergence rate as optimization criterion and explicitly
express this rate as a function of the link formation probabilities, the link formation spatial correlations, and the consensus weights. We prove that the MSE convergence rate is a convex, nonsmooth function of the weights, enabling global optimization of the weights for arbitrary link formation probabilities and link correlation structures. We extend our results to the case of asymmetric random links. We adopt as optimization criterion the mean squared deviation (MSdev) of the nodes' states from the current average state. We prove that MSdev is a convex function of the weights.
Simulations show that significant performance gain is achieved with our weight design method when compared with methods available in the literature.
\end{abstract}

\hspace{.43cm}\textbf{Keywords:} Consensus, weight optimization, correlated link failures, unconstrained optimization, sensor networks, switching topology, broadcast gossip.
\newpage

\section{Introduction}
\label{Intro}

This paper finds the optimal weights for the consensus algorithm in correlated random networks. Consensus is an iterative distributed algorithm that computes the global average of data distributed among a network of agents using only local communications. Consensus has renewed interest in distributed algorithms (\cite{Tsitsiklis,tsitsiklisThesis84}), arising in many different areas from distributed data fusion (\cite{SoummyaRamanujan,MouraInference,BoydFusion,giannakis-tsp,SoummyaEst}) to coordination of mobile autonomous agents (\cite{jabaNeghbor,Murray}).
 %or distributed optimization~(\cite{nedic})
 A recent survey is~\cite{MurraySwitching}.

This paper studies consensus algorithms in networks where the links (being online or off line)
are random. We consider two scenarios: 1) the network is random, because links in the network may fail at random times; 2) the network protocol is randomized, i.e., the link states along time are controlled by a randomized protocol (e.g., standard gossip algorithm~\cite{BoydGossip}, broadcast gossip algorithm~\cite{scaglione}). In both cases, we model the links as Bernoulli random variables. Each link has some formation probability, i.e., probability of being active, equal to $P_{ij}$. Different links may be correlated at the same time, which can be expected in real applications. For example, in wireless sensor networks (WSNs) links can be spatially correlated due to interference among close links or electromagnetic shadows that may affect several nearby sensors.

References on consensus under time varying  or random topology are (\cite{TsitsiklisFlocking,MurraySwitching,BoydMetropolis}) and (\cite{JadbabaieErgodic,SoummyaConferenceConsensus,Mesbahi,Kontinualni, scaglione}), among others, respectively. Most of the previous work is focussed on providing convergence conditions and/or characterizing the convergence rate under different assumptions on the network randomness (\cite{Mesbahi,SoummyaConferenceConsensus,Kontinualni}). For example, references~\cite{SoummyaConferenceConsensus} and~\cite{SoummyaChannelNoise} study consensus algorithm with spatially and temporally independent link failures. They show that a necessary and sufficient condition for mean squared and almost sure convergence is for the communication graph to be connected on average.

We consider here the weight optimization problem: how to assign the weights $W_{ij}$ with which the nodes mix their states across the network, so that the convergence towards consensus is the fastest possible. This problem has not been solved (with full generality) for consensus in random topologies.
We study this problem for networks with symmetric and asymmetric random links separately, since the properties of the corresponding algorithm are different. For symmetric links (and connected network topology on average), the consensus algorithm converges to the average of the initial nodes' states almost surely. For asymmetric random links, all the nodes asymptotically reach agreement, but they only agree to a random variable in the neighborhood of the true initial average.

We refer to our weight solution as probability-based weights (PBW).
PBW are simple and suitable for distributed implementation: we assume at each iteration that the weight of link $(i,j)$ is~$W_{ij}$ (to be optimized), when the link is alive, or 0, otherwise. Self-weights are adapted such that the row-sums of the weight matrix at each iteration are one. This is suitable for distributed implementation. Each node  updates readily after receiving messages from its current neighbors. No information about the number of nodes in the network or the neighbor's current degrees is needed. Hence, no additional online communication is required for computing weights, in contrast, for instance, to the case of the Metropolis weights (MW)~\cite{BoydMetropolis}.

Our weight design method assumes that the link formation probabilities and their spatial correlations are known.
With randomized protocols, the link formation probabilities and their correlations are induced by the protocol itself, and thus are known.
For networks with random link failures, the link formation probabilities relate to the signal to noise ratio at the receiver and can be computed.  In~\cite{SoummyaTopologyDesign}, the formation probabilities are designed in the presence of link communication costs and an overall network communication cost budget.
%In actual WSN, link failure probabilities depend on several factors, such as transmition power, mutual distance between transmitter and receiver, environment, quality of used sensor motes, etc.
  When the WSN infrastructure is known, it is possible to estimate the link formation \emph{probabilities} by measuring the reception rate of a link computed as the ratio between the number of received and the total number of sent packets. Another possibility is to estimate the link formation probabilities based on the received signal strength. Link formation \emph{correlations} can also be estimated on actual WSNs, \cite{ZhaoPackets}. If there is no training period to characterize quantitatively the links on an actual WSN, we can still model the probabilities and the correlations as a function of the transmitted power and the inter-sensor distances. Moreover, several empirical studies (\cite{ZhaoPackets,Potkonjic} and references therein) on the quantitative properties of wireless communication in sensor networks have been done that provide models for packet delivery performance in WSNs.

\mypar{Summary of the paper} Section~\ref{Contribution_and_Related_Work} lists our contributions, relate them with
the existing literature, and introduces notation used in the paper. Section~\ref{Problem_Model} describes our model of random networks and the consensus algorithm.
 Sections~\ref{weight_optimization_symmetric} and~\ref{weight_optimization_asymmetric} study the weight optimization for symmetric random graphs and asymmetric random graphs, respectively. Section~\ref{simulations} demonstrates the effectiveness of our approach with simulations.
  Finally, section~\ref{ConclusionSection} concludes the paper. We derive the proofs of some results in the Appendices A through C.

%\textbf{SECTION 2 TO BE REVISED. GO TO SECTION 3}

\section{Contribution, Related Work, and Notation}
\label{Contribution_and_Related_Work}
\mypar{Contribution}
Building our results on the previous extensive studies of convergence conditions and rates for consensus algorithm (e.g.,\cite{scaglione, JadbabaieErgodic, SoummyaTopologyDesign}), we address the problem of weights optimization in consensus algorithms with correlated random topologies. Our method is applicable to:
 1) networks with correlated random link failures (see, e.g.,~\cite{SoummyaTopologyDesign} and 2) networks with randomized algorithms (see, e.g,~\cite{BoydGossip, scaglione}). We first address the weight design problem for symmetric random links, and then extend the results to asymmetric random links.

With symmetric random links, we use the mean squared consensus convergence rate $\phi(W)$ as the optimization criterion. We explicitly
express the rate $\phi(W)$ as a function of the link formation probabilities, their correlations, and the weights. We prove that $\phi(W)$ is a convex,
nonsmooth function of the weights. This enables global optimization of the weights for arbitrary link formation probabilities and and arbitrary link correlation structures.
We solve numerically the resulting optimization problem by subgradient algorithm, showing also that the optimization computational cost grows tolerably with the network size.
We provide insights into weight design with a simple example of complete random network that admits closed form solution for the optimal weights and
 convergence rate and show how the optimal weights depend on the number of nodes, the link formation probabilities, and their correlations.

We extend our results to the case of asymmetric random links, adopting as an optimization criterion the mean squared deviation (from the
current average state) rate $\psi(W)$, and show that $\psi(W)$ is a convex function of the weights.

We provide comprehensive simulation experiments to demonstrate the effectiveness of our approach. We provide
two different models of random networks with correlated link failures; in addition, we study the broadcast gossip algorithm~\cite{scaglione},
 as an example of randomized protocol with asymmetric links. In all cases, simulations confirm that our method shows significant gain compared to the methods available in the literature. Also, we show that the gain increases with the network size.

\mypar{Related work}
Weight optimization for consensus with switching topologies has not received much attention in the literature.
Reference~\cite{SoummyaTopologyDesign} studies the tradeoff between the convergence rate and the amount of communication that takes place in the network. This reference is mainly concerned with the design of the network topology,  i.e., the design of the probabilities of reliable communication $\{{P}_{ij}\}$ and the weight $\alpha$ (assuming all nonzero weights are equal), assuming a communication cost $C_{ij}$ per link and an overall network communication budget. %However, it also optimizes the link weight $\alpha$. Here it is assumed that weights for all links are the same and equal to $\alpha$.
Reference~\cite{scaglione} proposes the broadcast gossip algorithm, where at each time step, a single node, selected at random, broadcasts unidirectionally its state to all the neighbors within its wireless range. We detail the broadcast gossip in subsection~\ref{subsect_broadcast_gossip}. This reference optimizes the weight for the broadcast gossip algorithm assuming equal weights for all links.

The problem of optimizing the weights for consensus under a random topology, when the weights for different links may be different, has not received much attention in the literature. Authors have proposed weight choices for random or time-varying networks~\cite{WhichShouldI,BoydMetropolis}, but no claims to optimality are made.
Reference~\cite{BoydMetropolis} proposes the Metropolis weights (MW), based on the Metropolis-Hastings algorithm for
simulating a Markov chain with uniform equilibrium distribution~\cite{Hastings}.
The weights choice in~\cite{WhichShouldI} is based on the fastest mixing Markov chain problem studied in~\cite{BoydChains} and uses the information about the underlying supergraph. We refer to this weight choice as the supergraph based weights~(SGBW).

\mypar{Notation}
\label{Notation}
Vectors are denoted by a lower case letter (e.g., $x$) and it is understood from the context if $x$ denotes a deterministic or random vector.
Symbol ${\mathbb R}^{N}$ is the $N$-dimensional Euclidean space.
Inequality $x \leq y$ is understood element wise, i.e., it is equivalent to $x_i \leq y_i$, for all $i$.
Constant matrices are denoted by capital letters (e.g., $X$) and random matrices are denoted by calligraphic letters (e.g., $\mathcal{X}$). A sequence of random matrices is denoted by $\left\{ \mathcal{X}(k)  \right\}_{k=0}^{\infty}$ and the random matrix indexed by $k$ is denoted $\mathcal{X}(k)$. If the distribution of $\mathcal{X}(k)$ is the same for any $k$, we shorten the notation $\mathcal{X}(k)$ to $\mathcal{X}$ when the time instant $k$ is not of interest.
Symbol ${\mathbb R}^{N \times M}$ denotes the set of $N \times M$ real valued matrices and ${\mathbb S}^{N}$ denotes the set of symmetric real valued $N \times N$ matrices. The $i$-th column of a matrix $X$ is denoted by $X_i$. Matrix entries are denoted by $X_{ij}$. Quantities $X \otimes Y$, $X \odot Y$, and $X \oplus Y$ denote the Kronecker product, the Hadamard product, and the direct sum of the matrices $X$ and $Y$, respectively.
Inequality $X \succeq Y $ ($X \preceq Y $) means that the matrix $X-Y$ is positive (negative) semidefinite.
Inequality $X \geq Y $ ($X \leq Y $) is understood entry wise, i.e., it is equivalent to  $X_{ij} \geq Y_{ij} $, for all $i$, $j$.
Quantities $\| X \|$, $\lambda_{\mathrm{max}}( X )$, and $r(X)$ denote the matrix 2-norm, the maximal eigenvalue, and the spectral radius of $X$, respectively. The identity matrix is~$I$. Given a matrix~$A$, $\mbox{Vec}(A)$ is the column vector that stacks the columns of~$A$.
For given scalars $x_1,...,x_N$, $\mathrm{diag} \left(x_1,...,x_N  \right)$ denotes the diagonal $N\times N$ matrix with the $i$-th diagonal entry equal to $x_i$. Similarly, $\mathrm{diag}(x)$ is the diagonal matrix whose diagonal entries are the elements of~$x$. The matrix $\mathrm{diag}\left( X \right)$ is a diagonal matrix with the diagonal equal to the diagonal of~$X$.
The $N$-dimensional column vector of ones is denoted with $1$. Symbol $J=\frac{1}{N}11^T$. The $i$-th canonical unit vector, i.e., the $i$-th column of $I$, is denoted by $e_i$. Symbol $|S|$ denotes the cardinality of a set $S$.

\section{Problem model}
\label{Problem_Model}
%This section briefly describes the consensus algorithm. It introduces the assumptions on the network randomness and also  %notation on the link formation probabilities and their correlations.
%\subsection{Concepts of graph theory}
%
%We model
%
%%Graph $\mathcal{}$ is a pair
%
This section introduces the random network model that we apply to networks with link failures and to networks
with randomized algorithms. It also introduces the consensus algorithm and the corresponding weight rule assumed in this paper.
\subsection{Random network model: symmetric and asymmetric random links}
\label{network_connectivity}
We consider random networks$\--$networks with random links or with a random protocol. Random links arise because of packet loss or drop, or
when a sensor is activated from sleep mode at a random time.
Randomized protocols like standard pairwise gossip~\cite{BoydGossip} or broadcast gossip~\cite{scaglione} activate links randomly. This section describes the network model that applies to both problems. We assume that the links are up or down (link failures) or selected to use (randomized gossip) according to spatially correlated Bernoulli random variables.

To be specific, the network is modeled by a graph $G=(V,E)$, where the set of nodes $V$ has cardinality $|V|=N$ and the set of
 directed edges $E$, with $|E|=2M$, collects all possible ordered node pairs that can communicate, i.e., all realizable links. For example, with geometric graphs,
 realizable links connect nodes within their communication radius. The graph $G$ is called supergraph, e.g.,~\cite{SoummyaTopologyDesign}. The directed edge $(i,j) \in E$ if node $j$ can transmit to node $i$.

The supergraph $G$ is assumed to be connected and without loops. For the fully connected supergraph, the number of directed edges (arrows) $2M$ is equal to
$N(N-1)$. We are interested in sparse supergraphs, i.e., the case when $M \ll \frac{1}{2}N(N-1)$.

%\textbf{UP TO HERE!}
%
%
%
%
%
%
%% Up to here !!!

Associated with the graph $G$ is its $N \times N$ adjacency matrix $A$:
\begin{equation*}
  {A}_{ij} = \left\{ \begin{array}{rl}
  1 &\mbox{ if $(i,j) \in E$ }\\
  0 &\mbox{otherwise}
       \end{array} \right.
       %\label{PBW}
\end{equation*}
The in-neighborhood set~$\Omega_{i}$ (nodes that can transmit to node $i$) and the in-degree~$d_i$ of a node~$i$ are
\begin{eqnarray*}
\Omega_{i}   &=& \left\{ j: (i,j) \in {E} \right\}\\
d_{i}   &=& |\Omega_{i}|.
\end{eqnarray*}
We model the connectivity of a random WSN at time step $k$ by a (possibly) directed random graph $\mathcal{G}(k) = \left( {V}, \mathcal{E}(k) \right)$. The random edge set is
$$\mathcal{E}(k)=\left\{ (i,j) \in E:\,\, (i,j) \,\,\mathrm{is\,\,online\,\,at\,\,time\,\,step\,\,k} \right\},$$
with $\mathcal{E}(k)\subseteq E$. The random adjacency matrix associated to $\mathcal{G}(k)$ is denoted by
$\mathcal{A}(k)$ and the random in-neighborhood for sensor $i$ by $\Omega_i(k)$.

%\mypar{Link formation model and statistics}
We assume that link failures are \emph{temporally independent} and \emph{spatially correlated}.                                                                        That is, we assume that the random matrices $\mathcal{A}(k), k=0,1,2,...$ are independent identically distributed.
The state of the link $(i,j)$ at a time step $k$ is a Bernoulli random variable, with mean $P_{ij}$, i.e., $P_{ij}$ is the
 formation probability of link $(i,j)$.
At time step $k$, different edges $(i,j)$ and $(p.q)$ may be correlated, i.e., the entries $\mathcal{A}_{ij}(k)$ and $\mathcal{A}_{pq}(k)$ may be correlated. For the link $r$,  by which node $j$ transmits to node $i$, and for the link $s$, by which node $q$ transmits to node $p$, the corresponding
cross-variance is
\[
\left[ R_q \right]_{rs} = \mathrm{E} \left[  \mathcal{A}_{ij} \mathcal{A}_{pq} \right] - P_{ij} P_{pq}.
\]
Time correlation, as spatial correlation, arises naturally in many scenarios, such as when nodes awake from the sleep schedule. However,
 it requires approach different than the one we pursue in this paper~\cite{SoummyaChannelNoise}. We plan to
address the weight optimization with temporally correlated links in our future work.
\subsection{Consensus algorithm}
\label{consensus-algorithm}
%
%Since we are mainly interested in WSN applications, we will use the term sensors instead of - agents.
%

Let $x_i(0)$ represent some scalar measurement or initial data available at sensor $i$, $i=1,...,N$.
Denote by $x_{\mbox{\scriptsize{avg}}}$ the average:
$$x_{\mbox{\scriptsize{avg}}} = \frac{1}{N} \sum_{i=1}^N x_i(0)$$
The consensus algorithm computes $x_{\mbox{\scriptsize{avg}}}$ iteratively at each sensor $i$ by the distributed weighted average:
\begin{equation}
x_i(k+1)= \mathcal{W}_{ii}(k)x_i(k)+ \sum_{j \in {\Omega}_i(k)} \mathcal{W}_{ij}(k) x_j(k)
\label{Eqn_consensus_ij}
\end{equation}

We assume that the random weights $\mathcal{W}_{ij}(k)$ at iteration $k$ are given by:
\begin{equation}
\mathcal{W}_{ij}(k) = \left\{ \begin{array}{rl}
 W_{ij} &\mbox{ if $j\in {\Omega}_i(k)$ } \\
  1-\sum_{m\in {\Omega}_i(k)}\mathcal{W}_{im}(k) &\mbox{ if $i=m$}\\
  0 &\mbox{ otherwise }
       \end{array} \right.
       \label{Eqn_PBW}
\end{equation}
In~\eqref{Eqn_PBW}, the quantities $W_{ij}$ are non random and will be the variables to be optimized in our work. We also take $W_{ii}=0$, for all $i$. By~\eqref{Eqn_PBW}, when the link is active, the weight is $W_{ij}$, and when not active it is zero.
Note that $W_{ij}$ are non zero only for edges $(i,j)$ in the supergraph $G$. If an edge $(i,j)$
 is not in the supergraph the corresponding $W_{ij}=0$ and $\mathcal{W}_{ij}(k) \equiv 0$.

We write the consensus algorithm in compact form.
 Let $x(k)=(x_1(k)\,\,x_2(k)\,\,...\,\,x_N(k))^T$, $W=\left[ W_{ij} \right]$, $\mathcal{W}(k) = \left[ \mathcal{W}_{ij}(k) \right]$.
The random weight matrix $\mathcal{W}(k)$ can be written in compact form as
\begin{equation}
\label{WEIGHTMATRIXCOMPACT}
\mathcal{W}(k)  =  W \odot \mathcal{A}(k)  -  \mathrm{diag} \left( W \mathcal{A}(k)  \right) + I
\end{equation}
and the consensus algorithm is simply stated with $x(k=0)=x(0)$ as
\begin{eqnarray}
\label{Eqn_consensus_Komapktno}
x(k+1) = \mathcal{W}(k) x(k),\,\,k \geq 0
\end{eqnarray}
To implement the update rule, nodes need to know their random in-neighborhood ${\Omega}_i(k)$ at every iteration.
 In practice, nodes determine ${\Omega}_i(k)$ based on who they receive messages from at iteration $k$.

It is well known~\cite{scaglione,JadbabaieErgodic} that, when the random matrix $\mathcal{W}(k)$ is symmetric, the consensus algorithm is average preserving, i.e., the sum of the states $x_i(k)$, and so the average state over time, does not
 change, even in the presence of random links. In that case the consensus algorithm converges almost surely to the true average $x_{\mbox{\scriptsize{avg}}}$. When the matrix $\mathcal{W}(k)$ is not symmetric, the average state is not preserved in time, and the state
 of each node converges to the same random variable with bounded mean squared error from $x_{\mbox{\scriptsize{avg}}}$~\cite{scaglione}.
 For certain applications, where high precision on computing the average $x_{\mbox{\scriptsize{avg}}}$ is required, average preserving, and thus a symmetric matrix $\mathcal{W}(k)$ is desirable. In practice, a symmetric matrix $\mathcal{W}(k)$ can be established by protocol design even if the underlying physical channels are asymmetric. This can be realized by ignoring unidirectional communication channels. This can be done, for instance, with a double acknowledgement protocol. In this scenario, effectively, the consensus algorithm sees the underlying random network as a symmetric network, and this scenario falls into the framework of our studies of symmetric links (section~\ref{weight_optimization_symmetric}).

 When the physical communication channels are asymmetric, and the error on the asymptotic consensus limit $c$ is tolerable, consensus with an asymmetric weight matrix $\mathcal{W}(k)$ can be used. This type of algorithm is easier to implement, since there is no need for acknowledgement protocols. An example of such a protocol is the broadcast gossip algorithm proposed in~\cite{scaglione}. Section~\ref{weight_optimization_asymmetric} studies this  type of algorithms.

\textbf{Set of possible weight choices: symmetric network.}
With symmetric random links, we will always assume $W_{ij}=W_{ji}$. By doing this we easily achieve the desirable property that $\mathcal{W}(k)$ is symmetric. The set of all possible weight choices for symmetric random links
 $S_W$ becomes:
\begin{equation}
\label{eqn_S_W_symmetric}
S_W = \left \{  W \in {\mathbb R}^{N \times N}: \, W_{ij}=W_{ji}, \,\,W_{ij}=0,\,\,\mathrm{if}\, (i,j) \notin E , \,\,W_{ii}=0, \,\,  \forall i, \right\}
\end{equation}

\textbf{Set of possible weight choices: asymmetric network.} With asymmetric random links, there is no good reason to require that $W_{ij}=W_{ji}$,
 and thus we drop the restriction $W_{ij}=W_{ji}$. The set of possible weight choices in this case becomes:
 \begin{equation}
\label{eqn_S_W_symmetric}
S_W^{\mbox{\scriptsize{asym}}} = \left \{  W \in {\mathbb R}^{N \times N}:  \,\,W_{ij}=0,\,\,\mathrm{if}\, (i,j) \notin E , \,\,W_{ii}=0, \,\,  \forall i, \right\}
\end{equation}
Depending whether the random network is symmetric or asymmetric, there will be two error quantities that will play a role. These will be discussed in detail in sections~\ref{weight_optimization_symmetric} and V, respectively. We introduce them here briefly, for reference.

\textbf{Mean square error (MSE): symmetric network.}
Define the consensus error vector $e(k)$ and the error covariance matrix $\Sigma(k)$:
\begin{eqnarray}
\label{eqn_error_vector}
e(k)&=&x(k)-x_{\mbox{\scriptsize{avg}}}1\\
\label{eqn_Sigma_k}
\Sigma(k)&=&
\mathrm{E} \left[ e(k) e(k) ^T \right].
\end{eqnarray}
The mean squared consensus error $\mathrm{MSE}$ is given by:
\begin{equation}
\label{eqn_MSE}
\mathrm{MSE}(k) = \sum_{i=1}^N \mathrm{E} \left[ \left(  x_i(k) - x_{\mbox{\scriptsize{avg}}}\right)^2 \right]  = \mathrm{E}\left[e(k)^Te(k)\right]=\mathrm{tr}\,\Sigma(k)
\end{equation}
\textbf{Mean square deviation (MSdev): asymmetric network.}
As explained, when the random links are asymmetric (i.e., when $\mathcal{W}(k)$ is not symmetric), and if the underlying supergraph is strongly connected, then the states of all nodes converge to a common value $c$ that is in general a random variable that depends on the sequence of network realizations and on the initial state $x(0)$ (see~\cite{JadbabaieErgodic,scaglione}). In order to have $c = x_{\mbox{\scriptsize{avg}}}$, almost surely, an additional
condition must be satisfied:
 \begin{equation}
 \label{eqn_assymetric_condition}
 1^T \mathcal{W} (k)= 1^T,\,\,\mathrm{a.s.}
 \end{equation}
See~\cite{JadbabaieErgodic,scaglione} for the details.
We remark that~\eqref{eqn_assymetric_condition} is a crucial assumption in the derivation of the MSE decay~\eqref{eqn2_Lemma_MSE_k}.
 Theoretically, equation~\eqref{EqnRhoW} is still valid if the condition $\mathcal{W}(k)=\mathcal{W}(k)^T$ is relaxed to $1^T \mathcal{W}(k) = 1^T$.
While this condition is trivially satisfied for symmetric links and symmetric weights $W_{ij}=W_{ji}$, it is very difficult to
 realize~\eqref{eqn_assymetric_condition} in practice when the random links are asymmetric. So, in our work, we do not assume~\eqref{eqn_assymetric_condition} with asymmetric links.

For asymmetric networks, we follow reference~\cite{scaglione} and introduce
 the mean square state deviation $\mathrm{MSdev}$ as a performance measure. Denote the current
  average of the node states by $x_{\mbox{\scriptsize{avg}}}(k) = \frac{1}{N}1^Tx(k)$. Quantity $\mathrm{MSdev}$ describes how far apart different
  states $x_i(k)$ are; it is given by
  \[
  \mathrm{MSdev}(k) =  \sum_{i=1}^N \mathrm{E} \left[ ( x_i(k) - x_{\mbox{\scriptsize{avg}}}(k) )^2 \right] = \mathrm{E} \left[ \zeta(k)^T \zeta(k) \right],
  \]
  where
  \begin{equation}
  \zeta(k) = x(k) - x_{\mbox{\scriptsize{avg}}}(k)1 = (I - J)x(k).
  \end{equation}
\subsection{Symmetric links: Statistics of $\mathcal{W}(k)$}
In this subsection, we derive closed form expressions for the first and the second order statistics
 on the random matrix $\mathcal{W}(k)$. %Saying weights causes possible confusion: weights are deterministic numbers
Let $q(k)$ be the random vector that collects the non redundant entries of $\mathcal{A}(k)$:
\begin{equation}
\label{eqn_definition_q}
q_l(k) = \mathcal{A}_{ij}(k),\, i<j,\, (i,j) \in E,
 \end{equation}
 where the entries of $\mathcal{A}(k)$ are ordered in lexicographic order with respect to $i$ and $j$, from left to right,
 top to bottom.
 For symmetric links, $\mathcal{A}_{ij}(k)=\mathcal{A}_{ji}(k)$, so the dimension of $q(k)$ is
 half of the number of directed links, i.e., $M$.
 We let the mean and the covariance of $q(k)$ and $\mathrm{Vec} \left( \mathcal{A} (k) \right)$ be:
\begin{eqnarray}
\label{eqn_pi_definition}
\pi &=& \mathrm{E}\left[ q(k)\right]\\
{\pi}_l&=& \mathrm{E} [q_l(k)]\\
\label{eqn_R_q_definition}
 R_q &=& \mathrm{Cov}( q(k) )  =
\mathrm{E}[\, ( q(k)  -  \pi)  \,\,(
q(k)  -  \pi) ^T\,] \\
R_A &=& \mathrm{Cov}(\, \mathrm{Vec}(\mathcal{A}(k)) \,)
\end{eqnarray}
The relation between $R_q$ and $R_A$ can be written as:
\begin{equation}
R_A = F R_q  F^T
\label{R_A}
\end{equation}
where $F \in {\mathbb R}^{N^2  \times M}$ is the zero one selection matrix that linearly maps $q(k)$ to $\mathrm{Vec}\left( \mathcal{A} (k) \right)$,
i.e., $\mathrm{Vec}\left( {\mathcal{A}} (k) \right) = F q(k).$
 We introduce further notation. Let $P$ be the matrix of the link formation probabilities
 \[
 P=\left[ P_{ij} \right]
 \]
Define the matrix $B \in {\mathbb R}^{N^2 \times N^2}$ with $N\times N$ zero diagonal blocks and $N\times N$ off
diagonal blocks $B_{ij}$ equal to:
\[
B_{ij} = 1e_i^T + e_j 1^T
\]
and write $W$ in terms of its columns $W = \left[ W_1\,\,W_2\,\,...\,\,W_N \right]$.
We let
\begin{eqnarray*}
W_C &=& W_1 \oplus W_2 \oplus ... \oplus W_N
\end{eqnarray*}
For symmetric random networks, the mean of the random weight matrix $\mathcal{W}(k)$ and of $\mathcal{W}^2(k)$
 play an important role for the convergence rate of the consensus algorithm. Using the above notation, we can get compact representations
 for these quantities, as provided in Lemma 1 proved in Appendix A.
\begin{lemma}
\label{LemmaStatisticsOfW}
Consider the consensus algorithm~(\ref{Eqn_consensus_Komapktno}).
Then the mean and the second moment~$R_C$ of $\mathcal{W}$ defined below are:
\begin{eqnarray}
\label{eqn:meancalW}
\overline{\rule{0pt}{9pt}W} &=& \mathrm{E}\left[ \mathcal{W} \right] = W \odot {P} + I - \mathrm{diag} \left( W {P}\right)\\
\label{eqn:covarcalW}
R_C&=& \mathrm{E}\left[ \mathcal{W}^2 \right] - \overline{\rule{0pt}{9pt}W}^2\\
&=& {W_C}^T\,\,\left\{\,{R_A}\odot (\, I\otimes  11^T\,\,\,        + \, 11^T \otimes  I \,\,\, - B)\right\}\,\,{W_C}
\end{eqnarray}
In the special case of spatially uncorrelated links, the second moment~$R_C$ of~$\mathcal{W}$ are
\begin{equation}
\label{eqn_R_C_uncorrelated}
\frac{1}{2}R_C =  \mathrm{diag}\left \{ \left( {\left(11^T - P \right)} \odot P \right) \left( W \odot W \right)\right \} -  {\left(11^T - P \right)} \odot P  \odot  W \odot W
\end{equation}
\end{lemma}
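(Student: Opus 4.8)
The plan is to handle the three claims separately: \eqref{eqn:meancalW} by a one-line computation, \eqref{eqn:covarcalW} as the main step, and \eqref{eqn_R_C_uncorrelated} as a specialization. For \eqref{eqn:meancalW} I take expectations in the compact representation \eqref{WEIGHTMATRIXCOMPACT}: the matrices $W$ and $I$ are deterministic and the maps $X\mapsto W\odot X$ and $X\mapsto\mathrm{diag}(WX)$ are linear, so $\mathrm{E}[\mathcal{W}] = W\odot\mathrm{E}[\mathcal{A}] - \mathrm{diag}(W\,\mathrm{E}[\mathcal{A}]) + I$, and substituting $\mathrm{E}[\mathcal{A}]=P$ gives the result. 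For the second moment I introduce the zero-mean fluctuations $\widetilde{\mathcal{A}} := \mathcal{A}-P$ and $\widetilde{\mathcal{W}} := \mathcal{W}-\overline{W}$, so that $\widetilde{\mathcal{W}} = W\odot\widetilde{\mathcal{A}} - \mathrm{diag}(W\widetilde{\mathcal{A}})$ is a zero-mean symmetric random matrix.

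Expanding $\mathcal{W}^2 = (\overline{W}+\widetilde{\mathcal{W}})^2$ and noting that the two cross terms $\overline{W}\,\mathrm{E}[\widetilde{\mathcal{W}}]$ and $\mathrm{E}[\widetilde{\mathcal{W}}]\,\overline{W}$ vanish (I keep both, since the factors need not commute), I get $R_C = \mathrm{E}[\mathcal{W}^2] - \overline{W}^2 = \mathrm{E}[\widetilde{\mathcal{W}}^2]$. Next I compute the $i$-th column of $\widetilde{\mathcal{W}}$: from $\widetilde{\mathcal{W}}e_i = W_i\odot(\widetilde{\mathcal{A}}e_i) - (W\widetilde{\mathcal{A}})_{ii}\,e_i$ and the symmetry $W=W^T$ one gets $(W\widetilde{\mathcal{A}})_{ii} = W_i^T\widetilde{\mathcal{A}}e_i = 1^T\mathrm{diag}(W_i)\widetilde{\mathcal{A}}e_i$, hence $\widetilde{\mathcal{W}}e_i = (I - e_i1^T)\,\mathrm{diag}(W_i)\,\widetilde{\mathcal{A}}e_i$, where $W_i$ is the $i$-th column of $W$.

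Because $\widetilde{\mathcal{W}}$ is symmetric, $(\widetilde{\mathcal{W}}^2)_{ij} = (\widetilde{\mathcal{W}}e_i)^T(\widetilde{\mathcal{W}}e_j)$, so $(R_C)_{ij} = \mathrm{E}[(\widetilde{\mathcal{A}}e_i)^T\,\mathrm{diag}(W_i)(I-1e_i^T)(I-e_j1^T)\,\mathrm{diag}(W_j)\,(\widetilde{\mathcal{A}}e_j)]$. Taking the expectation entrywise, the coefficient of $\mathrm{Cov}(\mathcal{A}_{ai},\mathcal{A}_{bj})$ — which is the $(a,b)$ entry of the $(i,j)$ block of $R_A$ under the column-stacking convention for $\mathrm{Vec}$ — equals $W_{ai}W_{bj}$ times the $(a,b)$ entry of $(I-1e_i^T)(I-e_j1^T) = I - 1e_i^T - e_j1^T + \delta_{ij}11^T$. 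Since $W_{ii}=0$, on the diagonal blocks ($i=j$) the extra $-1e_i^T - e_j1^T$ pieces are annihilated once contracted with $W_i$, and a short check shows the surviving coefficient matrix is exactly the $(i,j)$ block of $I\otimes 11^T + 11^T\otimes I - B$ (the blocks of $B$ being $1e_i^T + e_j1^T$ off the diagonal and $0$ on it, by definition). Recognizing $W_C = W_1\oplus\cdots\oplus W_N$ as the $N^2\times N$ block-diagonal selector of the columns $W_i$, reassembling the entries yields $R_C = W_C^T\{R_A\odot(I\otimes 11^T + 11^T\otimes I - B)\}W_C$.

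For \eqref{eqn_R_C_uncorrelated} it is cleanest to observe that the link $\{i,j\}$ perturbs $\mathcal{W}$ by a scalar times the edge Laplacian $-W_{ij}(e_i-e_j)(e_i-e_j)^T$, so $\widetilde{\mathcal{W}} = -\sum_{\{i,j\}\in E,\ i<j}(\mathcal{A}_{ij}-P_{ij})\,W_{ij}(e_i-e_j)(e_i-e_j)^T$; with spatially uncorrelated links the cross terms vanish, and using $\|e_i-e_j\|^2 = 2$ and $\mathrm{Var}(\mathcal{A}_{ij}) = P_{ij}(1-P_{ij})$ one gets $R_C = 2\sum_{i<j}P_{ij}(1-P_{ij})W_{ij}^2(e_i-e_j)(e_i-e_j)^T$; expanding the rank-one terms, the diagonal contributions collect into $\mathrm{diag}\{((11^T-P)\odot P)(W\odot W)\}$ and the off-diagonal ones into $-(11^T-P)\odot P\odot W\odot W$ (whose diagonal is $0$ since $W_{ii}=0$), which is \eqref{eqn_R_C_uncorrelated} after dividing by $2$. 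I expect the main obstacle to be the index bookkeeping in the third paragraph: aligning the block structure of $I\otimes 11^T + 11^T\otimes I - B$ with the covariances $\mathrm{Cov}(\mathcal{A}_{ai},\mathcal{A}_{bj})$ under the chosen $\mathrm{Vec}$ convention, tracking the two directed copies of each undirected link (the source of the factor $2$, i.e.\ the $\tfrac{1}{2}$ in \eqref{eqn_R_C_uncorrelated}), and verifying that the gap between the literal operator $(I-1e_i^T)(I-e_j1^T)$ and the symmetrized Kronecker expression is absorbed exactly by $W_{ii}=0$; the rest is routine linear algebra.
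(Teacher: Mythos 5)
Your proof is correct, and it takes a genuinely different route from the paper's. The paper's Appendix A expands $\mathcal{W}^2$ and $\overline{\rule{0pt}{9pt}W}^2$ into all their constituent terms and proves three separate identities, one for each quadratic piece ($(W\odot\mathcal{A})^2$, $\mathrm{diag}^2(W\mathcal{A})$, and the cross terms), each expressed as $W_C^T\{\mathcal{A}_2\odot M\}W_C$ for the masks $M=11^T\otimes I$, $I\otimes 11^T$, and $B$ respectively, with $\mathcal{A}_2=\mathrm{Vec}(\mathcal{A})\mathrm{Vec}(\mathcal{A})^T$; the cancellation against $\overline{\rule{0pt}{9pt}W}^2$ then happens term by term via $\mathrm{E}[\mathcal{A}_2]=P_2+R_A$. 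You instead center first, observing $R_C=\mathrm{E}[\widetilde{\mathcal{W}}^2]$ with $\widetilde{\mathcal{W}}=\mathcal{W}-\overline{\rule{0pt}{9pt}W}$, which performs that cancellation once and for all and makes the answer manifestly a function of $R_A$ alone; you then compute $(R_C)_{ij}$ as a single bilinear form $\mathrm{E}\bigl[(\widetilde{\mathcal{A}}e_i)^T\mathrm{diag}(W_i)(I-1e_i^T)(I-e_j1^T)\mathrm{diag}(W_j)(\widetilde{\mathcal{A}}e_j)\bigr]$ and match the coefficient matrix blockwise to $I\otimes 11^T+11^T\otimes I-B$, correctly noting that the mismatch on the diagonal blocks is killed by $W_{ii}=0$ (I checked this step; it holds). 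Your approach buys a cleaner sign bookkeeping (the paper's three-identity route is delicate on exactly this point) at the cost of relying on the symmetry of $\widetilde{\mathcal{W}}$ to write $(\widetilde{\mathcal{W}}^2)_{ij}$ as an inner product of columns, which is fine here since the lemma lives in the symmetric-links setting. For the uncorrelated case you also supply a derivation the paper omits, via the edge-Laplacian decomposition $\widetilde{\mathcal{W}}=-\sum_{i<j}(\mathcal{A}_{ij}-P_{ij})W_{ij}(e_i-e_j)(e_i-e_j)^T$, which is correct; the only loose phrasing is attributing the factor $2$ to "two directed copies of each link," whereas in your own computation it comes from $\|e_i-e_j\|^2=2$ — the final formula is nonetheless right.
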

For asymmetric random links,
the expression for the mean of the random weight matrix $\mathcal{W}(k)$
 remains the same (as in Lemma 1).
  For asymmetric random links, instead of $\mathrm{E}\left[ \mathcal{W}^2(k)\right]-J $ (consider eqn.~(18),(19) and the term
  $\mathrm{E}\left[ \mathcal{W}^2(k)\right]$ in it), the
  quantity of interest becomes $\mathrm{E} \left[  \mathcal{W}^T \left( I-J \right) \mathcal{W}(k)  \right]$ (The quantity
 of interest is different since the optimization criterion will be different.)
  For symmetric links, the matrix $\mathrm{E} \left[  \mathcal{W}^2\right]-J$
  is a quadratic matrix function of the weights $W_{ij}$; it depends also quadratically on the $P_{ij}$'s
  and is an affine function of $\left[ R_q \right]_{ij}$'s. The same will still hold for
   $\mathrm{E} \left[  \mathcal{W}^T \left( I-J \right) \mathcal{W}(k)  \right]$
 in the case of asymmetric random links. The difference, however, is that
     $\mathrm{E} \left[  \mathcal{W}^T \left( I-J \right) \mathcal{W}(k)  \right]$
      does not admit the compact representation as given in~\eqref{eqn:covarcalW}, and we do not
       pursue here cumbersome entry wise representations. In the Appendix C,
         we do present the expressions for the matrix $\mathrm{E} \left[  \mathcal{W}^T \left( I-J \right) \mathcal{W}(k)  \right]$
      for the broadcast gossip algorithm~\cite{scaglione} (that we study in subsection~\ref{subsect_broadcast_gossip}).

\section{Weight optimization: symmetric random links}
\label{weight_optimization_symmetric}
\subsection{Optimization criterion: Mean square convergence rate}
\label{subsect_mss_rate}

We are interested in finding the rate at which $\mathrm{MSE}(k)$ decays to zero and to optimize
 this rate with respect to the weights $W$.
First we derive the recursion for the error $e(k)$. We have from eqn.~(\ref{Eqn_consensus_Komapktno}):
\begin{eqnarray*}
1^Tx(k+1) &=&
1^T \, \mathcal{W}(k)x(k)=1^Tx(k)=1^Tx(0)=N\,x_{\mbox{\scriptsize{avg}}}\\
1^Te(k) &=& 1^Tx(k) - 1^T\,1\,x_{\mbox{\scriptsize{avg}}} =
N\,x_{\mbox{\scriptsize{avg}}} - N\,x_{\mbox{\scriptsize{avg}}} = 0
\end{eqnarray*}
We derive the error vector dynamics:
\begin{equation}
e(k+1) = x(k+1) - x_{\mbox{\scriptsize{avg}}}\,1 %\nonumber \\
= \mathcal{W}(k)x(k) -
\mathcal{W}(k)\,x_{\mbox{\scriptsize{avg}}}\,1 %\nonumber \\
=  \mathcal{W}(k)e(k) % \nonumber  \\
= \left( \mathcal{W}(k) - J \right) \,e(k)
\label{eqn_error_dynamics}
 \end{equation}
where the last equality holds because $Je(k) = \frac{1}{N}11^Te(k) = 0$.

Recall the definition of the mean squared consensus error~\eqref{eqn_MSE} and the error covariance matrix
in eqn.~\eqref{eqn_Sigma_k} and recall that $\mathrm{MSE}(k)=\mathrm{tr}\, \Sigma(k)=\mathrm{E} \left[ e(k)e(k)^T\right]$.
Introduce the quantity
\begin{equation}
\phi(W) = \lambda_{\mathrm{max}}\left(  \mathrm{E}[\mathcal{W}^2]-J  \right)
\label{EqnRhoW}
\end{equation}
The next Lemma shows that the mean squared error decays at the rate $\phi(W)$.
\begin{lemma}[m.s.s convergence rate]
\label{Lemma_MSE_K}
Consider the consensus algorithm given by eqn.~(\ref{Eqn_consensus_Komapktno}). Then:
\begin{eqnarray}
\label{eqn1_Lemma_MSE_k}
\mathrm{tr}\left( \Sigma(k+1) \right) &=& \mathrm{tr}\left( \left(\mathrm{E}[\mathcal{W}^2]-J\right)\Sigma(k) \right)\\
\label{eqn2_Lemma_MSE_k}
\mathrm{tr}\left( \Sigma(k+1) \right) &\leq&  \left(\phi(W)\right) \,\, \mathrm{tr}\left ( \Sigma(k) \right ),\,k\geq 0
\end{eqnarray}
\end{lemma}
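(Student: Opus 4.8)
The plan is to push the error covariance $\Sigma(k)$ through the recursion $e(k+1)=(\mathcal{W}(k)-J)e(k)$ from~\eqref{eqn_error_dynamics}, using the temporal independence of the $\mathcal{W}(k)$'s, and then take the trace. First I would note that $e(k)$ is a measurable function of $x(0)$ and $\mathcal{W}(0),\dots,\mathcal{W}(k-1)$, hence independent of $\mathcal{W}(k)$ by the i.i.d.\ assumption on $\{\mathcal{A}(k)\}$. Conditioning on $e(k)$ therefore gives $\Sigma(k+1)=\mathrm{E}\left[(\mathcal{W}(k)-J)\,e(k)e(k)^T\,(\mathcal{W}(k)-J)^T\right]=\mathrm{E}\left[(\mathcal{W}(k)-J)\,\Sigma(k)\,(\mathcal{W}(k)-J)^T\right]$, the outer expectation now being over $\mathcal{W}(k)$ alone.

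Next I would take the trace and use the cyclic property together with linearity of expectation: $\mathrm{tr}\,\Sigma(k+1)=\mathrm{E}\left[\mathrm{tr}\left((\mathcal{W}(k)-J)^T(\mathcal{W}(k)-J)\,\Sigma(k)\right)\right]$. Here the structural facts about $\mathcal{W}(k)$ enter. For symmetric random links $\mathcal{W}(k)=\mathcal{W}(k)^T$, and since the row-sums of $\mathcal{W}(k)$ are one we have $\mathcal{W}(k)\vone=\vone$, hence also $\vone^T\mathcal{W}(k)=\vone^T$; combined with $J=\frac1N\vone\vone^T$ and $J^2=J$ this yields $\mathcal{W}(k)J=J\mathcal{W}(k)=J$ and therefore the key identity $(\mathcal{W}(k)-J)^2=\mathcal{W}(k)^2-J$. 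Substituting and pulling $\mathrm{E}$ inside gives $\mathrm{tr}\,\Sigma(k+1)=\mathrm{tr}\left((\mathrm{E}[\mathcal{W}^2]-J)\,\Sigma(k)\right)$, which is~\eqref{eqn1_Lemma_MSE_k}.

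For the bound~\eqref{eqn2_Lemma_MSE_k} I would observe that both factors are positive semidefinite: $\mathrm{E}[\mathcal{W}^2]-J=\mathrm{E}[(\mathcal{W}-J)^2]\succeq 0$ as an average of squares of symmetric matrices, and $\Sigma(k)=\mathrm{E}[e(k)e(k)^T]\succeq 0$. Then I invoke the elementary fact that for any symmetric $A$ and any $B\succeq 0$, writing $B=\sum_i\mu_i v_iv_i^T$ with $\mu_i\ge 0$ gives $\mathrm{tr}(AB)=\sum_i\mu_i\,v_i^TAv_i\le \lambda_{\mathrm{max}}(A)\sum_i\mu_i\|v_i\|^2=\lambda_{\mathrm{max}}(A)\,\mathrm{tr}(B)$. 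Applying this with $A=\mathrm{E}[\mathcal{W}^2]-J$ and $B=\Sigma(k)$ and recalling the definition~\eqref{EqnRhoW} of $\phi(W)$ gives $\mathrm{tr}\,\Sigma(k+1)\le\phi(W)\,\mathrm{tr}\,\Sigma(k)$.

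None of the steps is heavy, so the only genuinely load-bearing point is the collapse of the cross terms in Step~2, i.e.\ that $\mathcal{W}(k)J=J\mathcal{W}(k)=J$: this is exactly where symmetry of $\mathcal{W}(k)$ and the unit-row-sum normalization in~\eqref{Eqn_PBW} are both used, and it is also the reason the argument does not carry over verbatim to the asymmetric setting of Section~\ref{weight_optimization_asymmetric}. A secondary subtlety worth flagging is that the factor $\phi(W)$ arises only after averaging $(\mathcal{W}(k)-J)^2$; a per-realization spectral bound would in general be larger, so taking the expectation before bounding the trace is essential.
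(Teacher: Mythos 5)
Your proof is correct and follows essentially the same route as the paper's (sketched) argument: propagate the covariance through $e(k+1)=(\mathcal{W}(k)-J)e(k)$ using independence of $e(k)$ and $\mathcal{W}(k)$, collapse $(\mathcal{W}(k)-J)^2$ to $\mathcal{W}(k)^2-J$ via $\mathcal{W}(k)J=J\mathcal{W}(k)=J$, and then bound $\mathrm{tr}(AB)\le\lambda_{\mathrm{max}}(A)\,\mathrm{tr}(B)$ for the PSD factors, which is exactly the matrix-analysis fact the paper cites for~\eqref{eqn2_Lemma_MSE_k}. The only nit is a wording slip: the tower-property step conditions on $\mathcal{W}(k)$ (or simply uses independence), not on $e(k)$.
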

\begin{proof}
From the definition of the covariance $\Sigma(k+1)$, using the dynamics of the error $e(k+1)$, interchanging expectation with the $\mathrm{tr}$ operator, using properties of the trace,  interchanging the expectation with the $\mathrm{tr}$ once again, using the independence of $e(k)$ and $\mathcal{W}(k)$, and, finally, noting that $\mathcal{W}(k) J=J$, we get~(\ref{eqn1_Lemma_MSE_k}). The independence between $e(k)$ and $\mathcal{W}(k)$ follows because $\mathcal{W}(k)$ is an i.i.d. sequence, and $e(k)$ depends on $\mathcal{W}(0)$,..., $W(k-1)$. Then $e(k)$ and $\mathcal{W}(k)$ are independent by the disjoint block theorem~\cite{Kar}. Having~\eqref{eqn1_Lemma_MSE_k}, eqn.~\eqref{eqn2_Lemma_MSE_k} can be easily shown, for example, by exercise 18, page 423,~\cite{MatrixAnalysis}.
\end{proof}
We remark that, in the case of asymmetric random links, MSE does not asymptotically go to zero. For the case of
asymmetric links, we use different performance metric. This will be detailed in section~\ref{weight_optimization_asymmetric}.
%
%
%We will also consider the following performance measure:
%\begin{equation}
%\tau = \frac{}{}
%\end{equation}
%
\subsection{Symmetric links: Weight optimization problem formulation}
\label{Problem_Formulation}
We now formulate
the weight optimization problem as finding the weights $W_{ij}$ that optimize the mean squared
 rate of convergence:
\begin{equation}
\begin{array}[+]{ll}
\mbox{minimize} & \phi(W) \\
\mbox{subject to} & W \in S_W
\end{array}
\label{Eqn_Our_optimization_problem}
\end{equation}
The set $S_W$ is defined in eqn.~(\ref{eqn_S_W_symmetric}) and the rate $\phi(W)$ is given by~(\ref{EqnRhoW}).
The optimization problem~(\ref{Eqn_Our_optimization_problem}) is unconstrained, since effectively the optimization variables are $W_{ij} \in {\mathbb R}$, $(i,j) \in E$, other entries of $W$ being zero.

%We assume that the underlying supergraph $G$ be connected.
A point $W^{\bullet} \in S_{W}$ such that $\phi(W^{\bullet})<1$ will always exist if the supergraph $G$ is connected.
Reference~\cite{jadbabaie_on_consensus} studies the case when the random matrices $\mathcal{W}(k)$ are stochastic and shows that $\phi(W^{\bullet})<1$ if the supergraph is connected and all the realizations of the random matrix
 $\mathcal{W}(k)$ are stochastic symmetric matrices. Thus, to locate a point $W^{\bullet} \in S_W$ such that $\phi(W^{\bullet})<1$, we just take $W^{\bullet}$ that assures all the realizations of $\mathcal{W}$ be symmetric stochastic matrices. It is trivial to show that for any point in the set
\begin{equation}
\label{eqn_S_stoch}
S_{\mathrm{stoch}} = \{W\in S_W:\,\, W_{ij}>0,\,\, \mathrm{if}\,\,(i,j) \in \mathrm{E},\,\, W1<1 \} \subseteq S_W
\end{equation}
all the realizations of $\mathcal{W}(k)$ are stochastic, symmetric. Thus, for any point $W^{\bullet}  \in  S_{\mathrm{stoch}}$, we have that
$\phi(W^{\bullet})<1$ if the graph is connected.

%
%Of course, we do not hope to find $W^{\bullet}$, $\phi(W^{\bullet})<1$, if the underlying supergraph is not connected. In other words, there exists no point $W^{\bullet} \in S_W$, such that $\phi(W)<1$, if $G$ is not connected. This fact is intuitive, since we cannot hope that all the nodes reach asymptotic agreement on the state value $x_{\mbox{\scriptsize{avg}}}$ (for arbitrary initial state $x_0$), if we can isolate at least two isles of them such that the isles interact almost never.
%
%
%
We remark that the optimum $W^{*}$ does not have to lie in the set $S_{\mathrm{stoch}}$. In general, $W^{*}$ lies in the set
\begin{equation}
\label{eqn_S_conv}
S_{\mathrm{conv}} = \left\{ W \in S_W: \phi(W)<1\right\} \subseteq S_W
\end{equation}
The set $S_{\mathrm{stoch}}$ is a proper subset of $S_{\mathrm{conv}}$ (If $W \in S_{\mathrm{stoch}}$ then $\phi(W)<1$, but the converse statement is not true in general.) We also remark that the consensus algorithm~\eqref{Eqn_consensus_Komapktno} converges \emph{almost surely} if
$\phi(W)<1$ (not only in mean
 squared sense). %this is not a repetition
This can be shown, for instance, by the technique developed in~\cite{jadbabaie_on_consensus}.

We now relate~(\ref{Eqn_Our_optimization_problem}) to reference~\cite{BoydWeights}. This reference studies the weight optimization for the case of a $static$ topology. In this case the topology is deterministic, described by the supergraph $G$. The link formation probability matrix $P$ reduces to the supergraph adjacency (zero-one) matrix~$A$, since the links occur always if they are realizable. Also, the link covariance matrix $R_q$ becomes zero.
The weight matrix $\mathcal{W}$ is deterministic and equal to
\begin{eqnarray*}
\mathcal{W}& = & \overline{\rule{0pt}{9pt}W} = 	\mathrm{diag}\left(	W A	\right) - W \odot A	+I
\end{eqnarray*}
Recall that $r(X)$ denotes the spectral radius of $X$. Then, the quantities $\left(r \left(    	 \mathcal{W}-J	 \right)\right)^2$ and $\phi\left( W \right)$ coincide. Thus, for the case of static topology, the optimization problem~(\ref{Eqn_Our_optimization_problem}) that we address reduces to the optimization problem proposed in~\cite{BoydWeights}.
\subsection{Convexity of the weight optimization problem}
\label{propertiesweightopt}
We show that  $\phi:\,S_W \rightarrow {\mathbb R}_{+}$ is convex, where $S_W$ is defined in eqn.~(\ref{eqn_S_W_symmetric}) and $\phi(W)$ by eqn.~(\ref{EqnRhoW}).

Lemma~\ref{LemmaStatisticsOfW} gives the closed form expression of $\mathrm{E}\left[\mathcal{W}^2\right]$. We see that $\phi(W)$ is the concatenation of a quadratic matrix function and $\lambda_{\mathrm{max}}(\cdot)$. This concatenation is not convex in general. However, the next Lemma shows that $\phi(W)$ is convex for our problem.
\begin{lemma}[Convexity of $\phi(W)$]
\label{Lemma_convexity}
The function $\phi:\,S_W \rightarrow {\mathbb R}_{+}$ is convex.
\end{lemma}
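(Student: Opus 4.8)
The plan is to exploit the variational formula $\lambda_{\mathrm{max}}(A)=\sup_{\|v\|=1} v^T A v$ together with the observation that, on the set $S_W$ of symmetric weights, $\mathrm{E}[\mathcal{W}^2]-J$ is not merely ``a quadratic matrix function composed with $\lambda_{\mathrm{max}}$'' but the expectation of the \emph{square of a matrix that is affine in $W$}. Writing $\phi$ as a supremum of convex functions of $W$ then yields convexity at once, sidestepping the fact that $X\mapsto\lambda_{\mathrm{max}}(X^TX)$ is not convex in general.

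First I would record the elementary structural facts. From~\eqref{WEIGHTMATRIXCOMPACT}, $\mathcal{W}(k)=W\odot\mathcal{A}(k)-\mathrm{diag}(W\mathcal{A}(k))+I$, so for each fixed realization of $\mathcal{A}(k)$ the map $W\mapsto\mathcal{W}(k)$ is affine (indeed linear up to the constant $I$) in the free entries $W_{ij}$, $(i,j)\in E$; hence $W\mapsto\mathcal{W}(k)-J$ is affine as well. Second, for $W\in S_W$ the matrix $\mathcal{W}(k)$ is symmetric and satisfies $\mathcal{W}(k)1=1$, hence $1^T\mathcal{W}(k)=1^T$, so that $\mathcal{W}(k)J=J\mathcal{W}(k)=J$ almost surely. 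Expanding $(\mathcal{W}(k)-J)^2=\mathcal{W}(k)^2-\mathcal{W}(k)J-J\mathcal{W}(k)+J^2=\mathcal{W}(k)^2-J$ and taking expectations gives
\[
\mathrm{E}[\mathcal{W}^2]-J=\mathrm{E}\!\left[(\mathcal{W}-J)^2\right].
\]

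Next I would combine these with the variational characterization of the largest eigenvalue of a symmetric matrix. Since $(\mathcal{W}-J)^2=(\mathcal{W}-J)^T(\mathcal{W}-J)$ by symmetry,
\[
\phi(W)=\lambda_{\mathrm{max}}\!\left(\mathrm{E}\!\left[(\mathcal{W}-J)^2\right]\right)=\sup_{\|v\|=1}\mathrm{E}\!\left[\,\|(\mathcal{W}-J)v\|^2\,\right].
\]
Fixing a unit vector $v$ and a realization of $\mathcal{A}$, the map $W\mapsto(\mathcal{W}-J)v$ is affine, so $W\mapsto\|(\mathcal{W}-J)v\|^2$ is convex (a squared Euclidean norm of an affine map). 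Convexity is preserved under taking the expectation over $\mathcal{A}$ and then the pointwise supremum over $v$, so $\phi:S_W\to{\mathbb R}_{+}$ is convex. As an alternative route one can use Lemma~\ref{LemmaStatisticsOfW} directly: write $\mathrm{E}[\mathcal{W}^2]-J=(\overline{W}-J)^2+R_C$ with $\overline{W}-J$ affine in $W$ and $R_C=\mathrm{E}[(\mathcal{W}-\overline{W})^2]\succeq 0$ a Gram-type quadratic in $W$, then invoke operator convexity of $t\mapsto t^2$ and of $X\mapsto X^TNX$ for $N\succeq 0$, together with the fact that $\lambda_{\mathrm{max}}$ is convex and monotone with respect to the L\"owner order.

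The only genuinely delicate point is the reduction $\mathrm{E}[\mathcal{W}^2]-J=\mathrm{E}[(\mathcal{W}-J)^2]$, i.e.\ verifying $\mathcal{W}(k)J=J\mathcal{W}(k)=J$ a.s.; this is where symmetry of the weights ($W\in S_W$) enters essentially, since it forces $1^T\mathcal{W}(k)=1^T$ and not only $\mathcal{W}(k)1=1$. Everything else is routine: affinity of $W\mapsto\mathcal{W}(k)$ is immediate from~\eqref{WEIGHTMATRIXCOMPACT}, and the passage from ``convex for each $(v,\mathcal{A})$'' to ``convex'' uses only the stability of convexity under expectation and pointwise supremum.
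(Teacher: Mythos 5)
Your proof is correct, and it takes a genuinely different (and arguably cleaner) route than the paper's. The paper restricts $\phi$ to lines $W=X+tY$, writes $\mathcal{W}(t)^2-J=t^2\mathcal{Z}_2+t\mathcal{Z}_1+\mathcal{Z}_0$ with $\mathcal{Z}_2=\mathcal{Y}^2\succeq 0$, deduces midpoint convexity of $t\mapsto\mathcal{Z}(t)$ in the L\"owner order, takes expectations, and then composes with $\lambda_{\mathrm{max}}$ using its convexity and L\"owner monotonicity. You instead push the variational formula all the way through: $\phi(W)=\sup_{\|v\|=1}\mathrm{E}\bigl[\|(\mathcal{W}-J)v\|^2\bigr]$ is a pointwise supremum of expectations of squared Euclidean norms of maps that are affine in $W$ for each fixed realization of $\mathcal{A}$, hence convex. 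Both arguments rest on the same structural fact — per-realization affinity of $W\mapsto\mathcal{W}$, visible from~\eqref{WEIGHTMATRIXCOMPACT} — so neither is circular, and yours avoids the L\"owner-order bookkeeping entirely. Two small remarks. First, the step you flag as delicate, $\mathcal{W}J=J\mathcal{W}=J$ a.s., is correct for $W\in S_W$ but is not actually needed even for your route: since $J$ does not depend on $W$, one may just as well write $\phi(W)=\sup_{\|v\|=1}\bigl\{\mathrm{E}[\|\mathcal{W}v\|^2]-v^TJv\bigr\}$ and note that subtracting the constant $v^TJv$ preserves convexity in $W$; this is consistent with the paper, whose proof never uses $\mathcal{W}J=J$ either. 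Second, your ``alternative route'' via Lemma~\ref{LemmaStatisticsOfW} and operator convexity is only sketched and would need more care to make the joint convexity in $W$ precise, but the main argument is complete without it. Your supremum representation also adapts immediately to the asymmetric case of Lemma~\ref{Lemma_convexity_asym}, via $\psi(W)=\sup_{\|v\|=1}\mathrm{E}\bigl[\|(I-J)^{1/2}\mathcal{W}v\|^2\bigr]$.
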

\begin{proof}
Choose arbitrary $X,\,Y \in S_W$. We restrict our attention to matrices $W$ of the form
\begin{equation}
\label{eqn_W_X_Y}
W = X + t \,Y,\,t \in {\mathbb R}.
\end{equation}
Recall the expression for $\mathcal{W}$ given by~(\ref{Eqn_PBW}) and~(\ref{Eqn_consensus_Komapktno}).
For the matrix $W$ given by~(\ref{eqn_W_X_Y}), we have for $\mathcal{W}=\mathcal{W}(t)$
\begin{eqnarray}
\label{eqn_W_cal_X_cal_Y_cal}
\mathcal{W}(t) &=& I - \mathrm{diag} \left[  \left( X+tY\right)\,\mathcal{A} \right] + \left( X+tY\right)\odot \mathcal{A}\\
&=& \mathcal{X} + t \mathcal{Y}, \: \nonumber
\mathcal{X} = X \odot \mathcal{A} + I - \mathrm{diag} \left( X \mathcal{A} \right),\:
\mathcal{Y} = Y \odot \mathcal{A}  - \mathrm{diag} \left( X \mathcal{A} \right)
\end{eqnarray}
%where
%$$\mathcal{X} = X \odot \mathcal{A} + I - \mathrm{diag} \left( X \mathcal{A} \right),$$ and $$\mathcal{Y} = Y \odot \mathcal{A}  - \mathrm{diag} \left( X \mathcal{A} \right).$$
Introduce the auxiliary function $\eta : \,{\mathbb R}\rightarrow {\mathbb R_{+}},$
\begin{equation}
\label{etafcn}
\eta(t) = \lambda_{\mathrm{max}} \left(  \mathrm{E} \left[ \mathcal{W}(t)^2\right] - J    \right)
\end{equation}
To prove that $\phi(W)$ is convex, it suffices to prove that the function $\phi$ is convex. Introduce $\mathcal{Z}(t)$ and compute successively
%We have that
\begin{eqnarray}
\label{eqn_w_sq_J}
\mathcal{Z}(t) &=& \mathcal{W}(t)^2 - J  \\
& = & \left( \mathcal{X} + t \mathcal{Y} \right)^2-J \\
& = & t^2 \, \mathcal{Y}^2  + t \, \left(\mathcal{X} \mathcal{Y} +  \mathcal{Y}\mathcal{X}\right)  +
 \mathcal{X}^2 - J \\
 \label{eqn_z_2_z_1_z_0}
 &=&  t^2 \,  \mathcal{Z}_2 + t \, \mathcal{Z}_1 + \mathcal{Z}_0
 \end{eqnarray}
The random matrices $\mathcal{Z}_2$, $\mathcal{Z}_1$, and $\mathcal{Z}_0$ do not depend on $t$. Also, $\mathcal{Z}_2$ is semidefinite positive. The function $\eta(t)$ can be expressed as
$$\eta(t) = \lambda_{\mathrm{max}} \left( \mathrm{E} \left[\mathcal{Z}(t)\right]\right)$$
We will now derive that
\begin{equation}
\mathcal{Z}\left( (1-\alpha) t + \alpha u\right)   \preceq  (1-\alpha) \, \mathcal{Z}(t) + \alpha \, \mathcal{Z} \left(  u \right),  \,\, \: \forall \alpha \in \left[0,1\right], \: \forall t,u \in {\mathbb R}
\label{Z_inequality}
\end{equation}
Since $\eta(t) = t^2$ is convex, the following inequality holds:
\begin{equation}
  \left[ (1-\alpha) t + \alpha u\right]^2 \leq (1-\alpha) t^2 + \alpha u^2,\,\: \alpha \in \left[0,1\right]
  \label{eq_quadratic}
\end{equation}
Since the matrix $\mathcal{Z}_2$ is positive semidefinite, eqn.~(\ref{eq_quadratic}) implies that:
$$  \left( \left( (1-\alpha) t + \alpha u\right)^2 \right) \mathcal{Z}_2
 \preceq   (1-\alpha)\, t^2 \,\mathcal{Z}_2 + \alpha \,u^2 \, \mathcal{Z}_2  ,\,\,\alpha \in \left[0,1\right] $$
After adding to both sides $\left( (1-\alpha) t + \alpha u\right) \, \mathcal{Z}_1 + \mathcal{Z}_0$, we get eqn.~(\ref{Z_inequality}).
Taking the expectation to both sides of~(\ref{Z_inequality}), get:
\begin{eqnarray*}
  \mathrm{E}  \left[ \, \mathcal{Z} \left( (1-\alpha) t + \alpha u \right) \,\right]
 & \preceq  &
  \mathrm{E}  \left[\, (1-\alpha) \mathcal{Z}(t) + \alpha \mathcal{Z}\left(  u  \right) \,\right] \\
& = & (1-\alpha) \mathrm{E}  \left[\,  \mathcal{Z}\left( t  \right) \, \right] + \alpha \mathrm{E}  \left[\,  \mathcal{Z} \left( u  \right) \, \right],\,\,\alpha \in \left[0,1\right]
\end{eqnarray*}
Now, we have that:
\begin{eqnarray*}
 \eta \left( (1-\alpha)t + \alpha u \right)
 & = & \lambda_{\mathrm{max}} \left( \, \mathrm{E} \left[  \mathcal{Z} \left( (1-\alpha) t + \alpha u \right)  \right]  \, \right) \\
& \leq & \lambda_{\mathrm{max}} \left(\,  (1-\alpha) \mathrm{E}  \left[\mathcal{Z}(t)\right] + \alpha \mathrm{E} \left[ \mathcal{Z} \left(  u  \right)  \right] \, \right)
 \\
& \leq &  (1-\alpha) \, \lambda_{\mathrm{max}} \left(\,   \mathrm{E}  \left[\mathcal{Z}(t)\right]  \,\right)
+ \alpha \, \lambda_{\mathrm{max}} \left( \,  \mathrm{E}  \left[ \mathcal{Z} \left(  u  \right)\right]  \, \right)\\
& = &  (1-\alpha) \, \eta(t) + \alpha \,\eta(u),\,\,\alpha \in \left[0,1\right]
\end{eqnarray*}
The last inequality holds since $\lambda_{\mathrm{max}}(\cdot)$ is convex. This implies  $\eta(t)$ is convex and hence $\phi(W)$ is convex.
\end{proof}
We remark that convexity of $\phi(W)$ is not obvious and requires proof. The function $\phi(W)$ is a concatenation of a matrix
 quadratic function and $\lambda_{\mathrm{max}}(\cdot)$. Although the function $\lambda_{\mathrm{max}}(\cdot) $ is a convex function of
 \emph{its argument}, one still have to show that the following concatenation is convex: $W \mapsto \mathrm{E}[\mathcal{W}^2]-J \mapsto
 \phi(W) = \lambda_{\mathrm{max}} \left( \mathrm{E}[\mathcal{W}^2]-J \right)$.
\subsection{Fully connected random network: Closed form solution}
\label{analytical_study}
To get some insight how the optimal weights depend on the network parameters, we consider the impractical, but simple
 geometry of a complete random symmetric graph. For this example, the optimization problem~\eqref{Eqn_Our_optimization_problem} admits a closed form solution, while, in general, numerical optimization is needed to solve~\eqref{Eqn_Our_optimization_problem}.
Although not practical, this example provides insight how the optimal weights depend on the network size $N$, the link formation probabilities, and the link formation spatial correlations.
 The supergraph is symmetric, fully connected, with $N$ nodes and $M=N(N-1)/2$ undirected links. We assume that all the links have the same formation probability, i.e., that $\mathrm{Prob} \left( q_l = 1 \right) = \pi_l =p$, $p \in (0,1]$, $l=1,...,M$. We assume that the cross-variance between any pair of links $i$ and $j$ equals to $\left[ R_q \right]_{ij} = \beta \, p(1-p)$, where $\beta$ is the correlation coefficient. The matrix $R_q$ is given by
\[
R_q = p(1-p) \left[ (1-\beta) I + \beta \, 11^T \right].
\]
The eigenvalues of $R_q$ are $\lambda_1(R_q) = p(1-p) \left(1 + (M-1) \, \beta\right)$, and $\lambda_i(R_q ) = p(1-p)\left(1 - \beta\right) \geq 0$, $i=2,...,M$.
The condition that $R_q \succeq 0$ implies that $\beta \geq -1/(M-1)$. Also, we have that
\begin{eqnarray}
\beta &:=&\frac { \mathrm{E} \left[  q_i q_j  \right] - \mathrm{E} \left[  q_i\right] \mathrm{E} \left[ q_j  \right]  }
  {   \sqrt{ \mathrm{Var}(q_i) }  \sqrt{ \mathrm{Var}(q_j)}  }\\
 &=& \frac { \mathrm{Prob} \left(  q_i=1, q_j=1  \right) - p^2}
  {p(1-p)} \geq -\frac{p}{1-p}
\end{eqnarray}
Thus, the range of $\beta$ is restricted to
  \begin{equation}
  \label{beta_range}
  \mathrm{max} \left( \frac{-1}{M-1}, \frac{-p}{1-p} \right)\leq \beta \leq 1.
  \end{equation}

Due to the problem symmetry, the optimal weights for all links are the same, say $W^{*}$. The expressions for the optimal weight $W^{*}$ and for the optimal convergence rate $\phi^{*}$ can be obtained after careful manipulations and expressing the matrix $\mathrm{E} \left[  \mathcal{W}^2 \right]-J$ explicitly in terms of $p$ and $\beta$; then, it is easy to show that:
\begin{eqnarray}
\label{eqn_optimal_weights}
W^{*} &=& \frac{1} {Np + (1-p) \left( 2 + \beta(N-2) \right)}\\
\label{eqn_optimal_rate}
\phi^{*} &=& 1 - \frac{1}{1 + \frac{1-p}{p} \left( \frac{2}{N}(1-\beta) + \beta  \right) }
\end{eqnarray}
%We make the following points. First, we observe that $W^{*}$ is a decreasing function of $p$, which is intuitive: as $p$ decreases, the network instantiations tend to be sparser, and thus $W^{*}$ should increase, i.e., the optimal weight should be appropriately rescaled (increased) as $p$ decreases. When $p$ approaches $1$, the random network reduces to the deterministic fully connected network. The optimal weight in~\eqref{eqn_optimal_weights} becomes $1/N$, which corresponds to the state matrix $\mathcal{W}=J$, i.e., to the weight matrix for instantaneous consensus.

The optimal weight $W^{*}$ decreases as $\beta$ increases. This is also intuitive, since positive correlations imply that the links emanating from the same node tend to occur simultaneously, and thus the weight should be smaller. Similarly, negative correlations imply that the links emanating from the same node tend to occur exclusively, which results in larger weights. Finally, we observe that in the uncorrelated case ($\beta = 0$), as $N$ becomes very large, the optimal weight behaves as $1/(Np)$. Thus, for the uncorrelated links and large network, the optimal strategy (at least for this example) is to rescale the supergraph-optimal weight $1/N$ by its
formation probability $p$. Finally, for fixed $p$ and $N$, the fastest rate is achieved when $\beta$ is as negative as possible.

\subsection{Numerical optimization: subgradient algorithm}
\label{Subgradient}
We solve the optimization problem in~(\ref{Eqn_Our_optimization_problem}) for generic networks by the subgradient algorithm,~\cite{Urruty}.
In this subsection, we consider spatially uncorrelated links, and we comment on extensions for spatially correlated links. Expressions for spatially correlated links are provided in Appendix B.

We recall that the function $\phi({W})$ is convex (proved in Section~\ref{propertiesweightopt}). It is nonsmooth because ${\lambda}_{\mathrm{max}}(\cdot)$ is nonsmooth. Let $H \in {\mathbb S}^N$ be the subgradient of the function $\phi(W)$. To derive the expression for the subgradient of $\phi(W)$,
 we use the variational interpretation of $\phi(W)$:
\begin{eqnarray}
\phi(W) &=&  \max_{v^Tv=1}{v^T\,\left( \mathrm{E}\left[ \mathcal{W}^2\right] -J \right) \,v} %\nonumber \\
\label{eqn_subgrad_calc}
= \max_{v^Tv=1}{f_v(W)}
\end{eqnarray}
By the subgradient calculus, a subgradient of $\phi(W)$ at point $W$ is equal to a subgradient $H_{u}$ of the function $f_u(W)$ for which the maximum of the optimization problem~(\ref{eqn_subgrad_calc}) is attained, see, e.g.,~\cite{Urruty}. The maximum of $f_v(W)$ (with respect to $v$)
 is attained at $v=u$, where $u$ is the eigenvector of the matrix $\mathrm{E}\left[ \mathcal{W}^2\right] -J$ that corresponds to its maximal eigenvalue, i.e., the maximal eigenvector. In our case, the function $f_u(W)$ is differentiable (quadratic function), and hence the subgradient of $f_u(W)$ (and also the subgradient of $\phi(W)$) is equal to the gradient of $f_u(W)$,~\cite{Urruty}:
\begin{equation}
  H_{ij} = \left\{ \begin{array}{rl}
  u^T\,\frac{\partial \left(   \mathrm{E}\left[ \mathcal{W}^2\right] -J       \right)}{\partial W_{ij}}\,u &\mbox{if $(i,j) \in E$ }\\
  0 &\mbox{otherwise.}
       \end{array} \right.
       \label{eq:PBW}
\end{equation}
%
%
%$$   H^v_{ij} = u^T\,\frac{\partial \left(   \mathrm{E}\left[ \mathcal{W}^2\right] -J       \right)}{\partial W_{ij}}\,u.    $$
We compute for $(i,j) \in E$
\begin{eqnarray}
H_{ij} & = & u^T\,\frac{\partial \left(  \overline{\rule{0pt}{9pt}W}^2-J + R_C     \right)}{\partial W_{ij}}\,u
\label{eq:PBW-b}\\
& = &  u^T\ \left( - 2\,\overline{\rule{0pt}{9pt}W}\,P_{ij}(e_i-e_j)(e_i-e_j)^T + 4\,W_{ij}\, P_{ij}(1-P_{ij})(e_i-e_j)(e_i-e_j)^T  \right)\,u \nonumber \\
& = &2  {P}_{ij} (u_i - u_j)u^T(\overline{\rule{0pt}{9pt}W}_j-\overline{\rule{0pt}{9pt}W}_i) + 4{P}_{ij}(1-P_{ij})W_{ij}(u_i-u_j)^2
\end{eqnarray}
\begin{algorithm}
\caption{Subgradient algorithm}
%\label{alg:myalgorithmname}
\begin{algorithmic}
    \STATE  Set initial $W^{(1)} \in S_W$ \\
    \STATE  Set $k=1$  \\
    \STATE
    Repeat
    %\STATE
        \\ $\,\,\,\,\,\,\,\,\,\,$ Compute a subgradient $H^{(k)}$ of $\phi$ at $W^{(k)}$, and set $W^{(k+1)} = W^{(k)}-{\alpha}_kH^{(k)}$
    %\STATE
        \\ $\,\,\,\,\,\,\,\,\,\,$ $k:=k+1$
        %\UNTIL{some condition is met}
\end{algorithmic}
\label{subradient-algorithm}
\end{algorithm}
The subgradient algorithm is given by algorithm~\ref{subradient-algorithm}. The stepsize ${\alpha}_k$ is nonnegative, diminishing, and nonsummable: ${\mathrm{lim}}_{k\rightarrow \infty} {\alpha}_k = 0$, $\sum_{k=1}^{\infty} {\alpha}_k = \infty$. We choose $\alpha_k = \frac{1}  {\sqrt{k}}$, $k=1,2,...$, similarly as in~\cite{BoydWeights}.
\section{Weight optimization: asymmetric random links}
\label{weight_optimization_asymmetric}
We now address the weight optimization for asymmetric random networks. Subsections~\ref{opt_criter_asymm} and V-B introduce the optimization criterion and the corresponding weight optimization problem, respectively.
Subsection V-C shows that this optimization problem is convex.
\subsection{Optimization criterion: Mean square deviation convergence rate}
\label{opt_criter_asymm}
  Introduce now
  \begin{equation}
  \label{PSI_FUNCTION}
  \psi(W): = \lambda_{\mathrm{max}} \left(  \mathrm{E} \left[ \mathcal{W}^T \left( I-J\right) \mathcal{W} \right] \right).
  \end{equation}
Reference~\cite{scaglione} shows that the mean square deviation $\mathrm{MSdev}$
 satisfies the following equation:
 \begin{equation}
 \label{eqn_zeta}
 \mathrm{MSdev}(k+1)  \leq  \psi(W) \, \mathrm{MSdev}(k).
 \end{equation}
 Thus, if the quantity $\psi(W)$ is strictly less than one, then $\mathrm{MSdev}$ converges to zero asymptotically, with the worst case
  rate equal to $\psi(W)$. We remark that the condition~\eqref{eqn_assymetric_condition} is not needed for eqn.~\eqref{eqn_zeta} to hold, i.e.,
  MSdev converges to zero even if condition~\eqref{eqn_assymetric_condition} is not satisfied; this condition is needed only for eqn.~\eqref{eqn2_Lemma_MSE_k} to hold, i.e., only to have MSE to converge to zero.
\subsection{Asymmetric network: Weight optimization problem formulation}
In the case of asymmetric links, we propose to optimize the mean square deviation convergence rate, i.e., to solve the following optimization problem:
\begin{equation}
\begin{array}[+]{ll}
\mbox{minimize} & \psi(W) \\
\mbox{subject to} & W \in S_W^{\mbox{\scriptsize{asym}}} \\
                & \sum_{i=1}^N P_{ij} W_{ij} =1,\,\,i=1,...,N
\end{array}
\label{Eqn_asym_optimization_problem}
\end{equation}
The constraints in the optimization problem~\eqref{Eqn_asym_optimization_problem} assure that, in expectation, condition~\eqref{eqn_assymetric_condition} is satisfied, i.e., that
\begin{equation}
\label{constraint}
1^T\,\mathrm{E} \left[ \mathcal{W}\right] = 1^T.
\end{equation}

If~\eqref{constraint} is satisfied, then the consensus algorithm converges to the true average $x_{avg}$ in expectation~\cite{scaglione}.

Equation~\eqref{constraint} is a linear constraint with respect to the weights $W_{ij}$, and thus does not violate the convexity of
the optimization problem~\eqref{Eqn_asym_optimization_problem}.
We emphasize that in the case of asymmetric links, we do not assume the weights $W_{ij}$ and $W_{ji}$ to be equal.
In section~\ref{subsect_broadcast_gossip}, we show that allowing $W_{ij}$ and $W_{ji}$ to be different leads to better
solutions in the case of asymmetric networks.
\subsection{Convexity of the weight optimization problem}
\label{propertiesweightopt_asym}
We show that the function $\psi(W)$ is convex.
We remark that reference~\cite{scaglione} shows that the function is convex, when all the weights $W_{ij}$ are equal
 to $g$. We show here that this function is convex even when the weights are different.
\begin{lemma}[Convexity of $\psi(W)$]
\label{Lemma_convexity_asym}
The function $\phi:\,S_W^{\mbox{\scriptsize{asym}}} \rightarrow {\mathbb R}_{+}$ is convex.
\end{lemma}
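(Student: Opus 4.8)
The plan is to repeat, almost verbatim, the argument used to prove Lemma~\ref{Lemma_convexity} for $\phi(W)$, the only new ingredient being the observation that $I-J$ is positive semidefinite. First I would fix arbitrary $X,Y \in S_W^{\mbox{\scriptsize{asym}}}$ and restrict attention to the line $W = X + tY$, $t \in {\mathbb R}$. Exactly as in~\eqref{eqn_W_cal_X_cal_Y_cal}, the random weight matrix is affine in $t$, namely $\mathcal{W}(t) = \mathcal{X} + t\mathcal{Y}$ with random matrices $\mathcal{X},\mathcal{Y}$ not depending on $t$. Consequently
\[
\mathcal{Z}(t) := \mathcal{W}(t)^T (I-J) \mathcal{W}(t) = t^2\,\mathcal{Z}_2 + t\,\mathcal{Z}_1 + \mathcal{Z}_0, \qquad \mathcal{Z}_2 = \mathcal{Y}^T (I-J) \mathcal{Y},
\]
is a matrix-valued quadratic in $t$, and the crucial point is that $\mathcal{Z}_2 \succeq 0$: since $J = \frac{1}{N}11^T$ is an orthogonal projection we have $0 \preceq I - J \preceq I$, so $\mathcal{Y}^T(I-J)\mathcal{Y} \succeq 0$ for every realization.

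Next I would establish the matrix convexity inequality
\[
\mathcal{Z}\big( (1-\alpha)t + \alpha u \big) \preceq (1-\alpha)\,\mathcal{Z}(t) + \alpha\,\mathcal{Z}(u), \qquad \alpha \in [0,1], \ t,u \in {\mathbb R},
\]
by the same device as in~\eqref{Z_inequality}: multiply the scalar convexity inequality $[(1-\alpha)t + \alpha u]^2 \le (1-\alpha)t^2 + \alpha u^2$ by the positive semidefinite matrix $\mathcal{Z}_2$ to handle the quadratic term, then add $[(1-\alpha)t + \alpha u]\,\mathcal{Z}_1 + \mathcal{Z}_0$ to both sides (this term is affine in the convex combination, hence exact). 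Taking expectations preserves the Löwner order, so $t \mapsto \mathrm{E}[\mathcal{Z}(t)]$ is matrix convex.

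Finally, writing $\eta(t) := \lambda_{\mathrm{max}}\big( \mathrm{E}[\mathcal{Z}(t)] \big) = \psi(X+tY)$, I would combine monotonicity of $\lambda_{\mathrm{max}}(\cdot)$ with respect to $\preceq$ and its convexity to obtain
\[
\eta\big( (1-\alpha)t + \alpha u \big) \le \lambda_{\mathrm{max}}\big( (1-\alpha)\mathrm{E}[\mathcal{Z}(t)] + \alpha\mathrm{E}[\mathcal{Z}(u)] \big) \le (1-\alpha)\,\eta(t) + \alpha\,\eta(u),
\]
so $\eta$ is convex on ${\mathbb R}$; since $X,Y$ were arbitrary, $\psi$ is convex on $S_W^{\mbox{\scriptsize{asym}}}$. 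I do not expect any genuine obstacle here: the proof is structurally identical to that of Lemma~\ref{Lemma_convexity}, and the sole step requiring a moment's care is verifying $I-J \succeq 0$, which is what guarantees the leading coefficient $\mathcal{Y}^T(I-J)\mathcal{Y}$ is positive semidefinite (in the symmetric case of Lemma~\ref{Lemma_convexity} the corresponding leading coefficient was $\mathcal{Y}^2$ with $\mathcal{Y}$ symmetric, hence automatically PSD). Together with the fact that the extra constraint $\sum_{i} P_{ij} W_{ij} = 1$ in~\eqref{Eqn_asym_optimization_problem} is affine, this shows the asymmetric weight optimization problem is convex.
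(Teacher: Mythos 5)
Your proposal is correct and follows essentially the same route as the paper: restrict to the line $W = X + tY$, observe that $\mathcal{Z}(t) = \mathcal{W}(t)^T(I-J)\mathcal{W}(t)$ is a matrix quadratic in $t$ with positive semidefinite leading coefficient $\mathcal{Z}_2 = \mathcal{Y}^T(I-J)\mathcal{Y}$, take expectations, and compose with the monotone convex function $\lambda_{\mathrm{max}}(\cdot)$. The paper's proof is stated as a direct adaptation of Lemma~\ref{Lemma_convexity} with exactly this replacement of $\mathcal{Z}_2$, and your explicit justification that $I-J\succeq 0$ is the same "obvious" positive semidefiniteness the paper invokes.
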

\begin{proof}
The proof is very similar to the proof of Lemma~\ref{Lemma_convexity}. The proof starts with introducing $W$ as in eqn.~\eqref{eqn_W_X_Y}
 and with introducing $\mathcal{W}(t)$ as in eqn.~\eqref{eqn_W_cal_X_cal_Y_cal}. The difference is that, instead of considering the matrix
 $\mathcal{W}^2-J$, we consider now the matrix $\mathcal{W}^T \,(I-J)\, \mathcal{W}$. In the proof of Lemma~\ref{Lemma_convexity}, we
  introduced the auxiliary function $\eta(t)$ given by~\eqref{etafcn}; here, we introduce the auxiliary function $\kappa(t)$, given by:
  \begin{equation}
  \kappa(t)  =  \lambda_{\mathrm{max}}  \left(    \mathcal{W}(t)^T (I-J) \mathcal{W}     \right),
  \end{equation}
  and show that $\psi(W)$ is convex by proving that $\kappa(t)$ is convex.
  Then, we proceed as in the proof of Lemma~\ref{Lemma_convexity}. In eqn.~\eqref{eqn_z_2_z_1_z_0}
    the matrix $\mathcal{Z}_2$ becomes $\mathcal{Z}_2:=\mathcal{Y}^T (I-J) \mathcal{Y}$.
   The random matrix $\mathcal{Z}_2$
   is obviously positive semidefinite. The proof then proceeds as in Lemma~\ref{Lemma_convexity}.
  \end{proof}
\section{Simulations}
\label{simulations}
We demonstrate the effectiveness of our approach with a comprehensive set of simulations. These simulations cover both examples of
asymmetric and symmetric networks and both networks with random link failures and with randomized protocols.
In particular, we consider the following two standard sets of experiments with random networks: 1) spatially correlated link failures and symmetric links and 2) randomized protocols, in particular, the broadcast gossip algorithm~\cite{scaglione}.
 With respect to the first set, we consider correlated link failures with two types of correlation structure.
 We are particularly interested in studying the
       dependence of the performance and of the gains
      on the size of the network $N$ and on the link correlation structure.
 %Broadcast gossip protocol~\cite{scaglione,scaglione} is a randomized protocol for distributed averaging.

 In all these experiments, we consider geometric random graphs.  Nodes communicate among themselves
  if within their radius of communication, $r$. The nodes are uniformly distributed on a unit square. The number of nodes is
   $N=100$ and the average degree is $15\%N$. In subsection~\ref{unreliable_links}, the random instantiations of the networks are undirected; in subsection~VI-B, the random
  instantiations of the networks are directed.

  In the first set of experiments with correlated link failures, the link formation probabilities $P_{ij}$ are chosen such that they decay
   quadratically with the distance:
   \begin{equation}
   \label{eqn_P_ij_s}
   P_{ij} = 1-k\, \left(  \frac{\delta_{ij}}{r}  \right)^2,
   \end{equation}
where we choose $k=0.7$. We see that, with~\eqref{eqn_P_ij_s}, a link will be active with high probability
 if the nodes are close ($\delta_{ij} \simeq 0$), while the link will be down with
 probability at most $0.7$, if the nodes are apart by $r$.

 %In section~\ref{unreliable_links}, we consider two types of correlations among the links. These will be defined in subsection~\ref{unreliable_links}.
%
% In the second set of experiments, with broadcast gossip algorithm, the link formation probabilities and their correlations
%  are induced by protocol $\--$ this will be explained in subsection~\ref{subsect_broadcast_gossip}.

 We recall that we refer to our weight design, i.e., to the solutions of the weight optimization problems~\eqref{Eqn_Our_optimization_problem},~\eqref{Eqn_asym_optimization_problem}, as probability based weights (PBW).
 We study the performance of PBW, comparing it with
  the standard weight choices available in the literature: in subsection~\ref{unreliable_links}, we compare it with the Metropolis weights (MW),
  discussed in~\cite{BoydWeights}, and the supergraph based weights (SGBW). The SGBW are
   the optimal (nonnegative) weights designed for a static (nonrandom) graph $G$,
    which are then applied to a random network when the underlying supergraph is $G$.
   This is the strategy used in~\cite{WhichShouldI}. For asymmetric links (and for asymmetric weights $W_{ij}\neq W_{ji}$), in subsection~\ref{subsect_broadcast_gossip}, we compare PBW
    with the optimal weight choice in~\cite{scaglione} for broadcast gossip that considers all the weights to be equal.

In the first set of experiments in subsection~\ref{unreliable_links}, we quantify the performance gain of PBW over SGBW and MW by the gains:
 \begin{equation}
 \label{gamma_s_tau}
     \Gamma_s^{\tau} = \frac{\tau_{\mathrm{SGBW}}}{\tau_{\mathrm{PBW}}}
     \end{equation}
  where $\tau$ is a time constant defined as:
  \begin{equation}
  \label{eqn_tau}
 \tau = \frac{1}{0.5\, \mathrm{ln}\, \phi(W)}
  \end{equation}
  We also compare PBW with SGBW and MW with the following measure:
  \begin{eqnarray}
  \label{eqn_gamma_s_m_eta}
  \Gamma_s^{\eta} &=&  \frac{\eta_{\mathrm{SGBW}}}{\eta_{\mathrm{PBW}}}    \\
  \Gamma_m^{\eta} &=&  \frac{\eta_{\mathrm{MW}}}{\eta_{\mathrm{PBW}}}
  \end{eqnarray}
  where $\eta$ is the asymptotic time constant defined by
  \begin{eqnarray}
  \label{eqn_eta}
  \eta &=& \frac{1}{|\gamma|}   \\
  \gamma &=& \lim_{k \rightarrow \infty}  \left( \frac{\|e(k)\|}{\|e(0)\|} \right)^{1/k}
  \end{eqnarray}
 Reference~\cite{WhichShouldI} shows that for random networks $\eta$ is an almost sure constant and $\tau$ is an upper bound on $\eta$.
  Also, it shows that
  $\tau$ is an upper bound on $\eta$.

Subsections~\ref{unreliable_links} and~\ref{subsect_broadcast_gossip} will provide further details on the expermints.
\subsection{Symmetric links: random networks with correlated link failures}
\label{unreliable_links}
To completely define the probability distribution of the random link vector $q \in {\mathbb R}^M$, we must assign probability to each of the $2^M$ possible realizations of $q$, $q = (\alpha_1,...,\alpha_M)^T$,  $\alpha_i \in \{0,1\}$. Since in networks of practical interest $M$ may be very large, of order $1000$ or larger, specifying the complete distribution of the vector $q$ is most likely infeasible. Hence, we work with the second moment description and specify only the first two moments of its distribution, the mean and the covariance, $\pi$ and $R_q$. Without loss of generality, order the links so that $\pi_1 \leq \pi_2 \leq...\leq \pi_M$.
\begin{lemma}
The mean and the variance $(\pi, R_q)$ of a Bernoulli random vector satisfy:
\begin{eqnarray}
\label{eqn_restriction_pi}
0 &\leq& \pi_i \, \, \leq
 \, \, 1,\,\,i=1,...,N\\
\label{eqn_restriction_Rq}
R_q &\succeq& 0 \\
\label{eqn_restriction_Rq_ij}
\mathrm{max} \left(    -\pi_i \pi_j,\, \pi_i + \pi_j -1 - \pi_i\,\pi_j    \right)   &\leq&  \left[ R_q \right]_{ij} \leq  \pi_i\,(1-\pi_j) = \overline{R}_{ij}, \,\, i < j
\end{eqnarray}
\end{lemma}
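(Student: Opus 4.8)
The plan is to verify each of the three assertions directly from the defining property that every component $q_i$ is a $\{0,1\}$-valued random variable with $\mathrm{Prob}(q_i=1)=\pi_i$, together with the elementary Fr\'echet bounds on the pairwise joint probabilities $\mathrm{Prob}(q_i=1,q_j=1)$. The argument is self-contained and requires no machinery beyond inclusion--exclusion and the definition of a covariance matrix.

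First, \eqref{eqn_restriction_pi} is immediate, since $\pi_i=\mathrm{E}[q_i]=\mathrm{Prob}(q_i=1)\in[0,1]$. Second, \eqref{eqn_restriction_Rq} holds because $R_q$ is by definition a covariance matrix: for any $v\in{\mathbb R}^M$ we have $v^T R_q v=\mathrm{E}\big[(v^T(q-\pi))^2\big]=\mathrm{Var}(v^T q)\ge 0$, so $R_q\succeq 0$. Third, for the entrywise bounds \eqref{eqn_restriction_Rq_ij}, fix $i<j$ and set $p_{ij}:=\mathrm{Prob}(q_i=1,q_j=1)$. Because $q_i$ and $q_j$ take values in $\{0,1\}$, the product $q_iq_j$ is the indicator of the event $\{q_i=1,q_j=1\}$, hence $\mathrm{E}[q_iq_j]=p_{ij}$ and $[R_q]_{ij}=p_{ij}-\pi_i\pi_j$. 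The Fr\'echet--Hoeffding inequalities give, on one side, $p_{ij}\le\min(\pi_i,\pi_j)$ (the event is contained in both $\{q_i=1\}$ and $\{q_j=1\}$), and, on the other side, by inclusion--exclusion $p_{ij}=\mathrm{Prob}(q_i=1)+\mathrm{Prob}(q_j=1)-\mathrm{Prob}(q_i=1\ \text{or}\ q_j=1)\ge\pi_i+\pi_j-1$, together with the trivial $p_{ij}\ge 0$, so $p_{ij}\ge\max(0,\pi_i+\pi_j-1)$. Subtracting $\pi_i\pi_j$ throughout yields $\max(0,\pi_i+\pi_j-1)-\pi_i\pi_j\le[R_q]_{ij}\le\min(\pi_i,\pi_j)-\pi_i\pi_j$. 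Finally, invoking the assumed ordering $\pi_1\le\cdots\le\pi_M$ with $i<j$ gives $\pi_i\le\pi_j$, so $\min(\pi_i,\pi_j)=\pi_i$ and the upper bound becomes $\pi_i(1-\pi_j)=\overline{R}_{ij}$; similarly $\max(0,\pi_i+\pi_j-1)-\pi_i\pi_j=\max(-\pi_i\pi_j,\,\pi_i+\pi_j-1-\pi_i\pi_j)$, the stated lower bound.

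There is no genuine obstacle here; the statement is a list of standard necessary conditions, and the only points demanding care are bookkeeping ones: that $\mathrm{E}[q_iq_j]$ is a probability precisely because the variables are $0/1$, and that the link ordering is what converts $\min(\pi_i,\pi_j)$ into $\pi_i$ and thereby produces the clean closed form $\overline{R}_{ij}$. It may also be worth remarking (although it is not part of the claim) that these conditions are necessary but not sufficient when $M\ge 3$: not every pair $(\pi,R_q)$ satisfying them is the first two moments of some Bernoulli vector, which is why the simulations must still verify $R_q\succeq 0$ explicitly for the constructed correlation models.
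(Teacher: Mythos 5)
Your proof is correct and follows essentially the same route as the paper's: $\pi_i$ and $R_q$ are constrained as probabilities and a covariance matrix, and the entrywise bounds come from $[R_q]_{ij}=\mathrm{Prob}(q_i=1,q_j=1)-\pi_i\pi_j$ combined with inclusion--exclusion and the containment bound, with the link ordering giving $\min(\pi_i,\pi_j)=\pi_i$. Your closing remark that these conditions are necessary but not sufficient matches the paper's own discussion immediately after the lemma.
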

\begin{proof}
Equations~\eqref{eqn_restriction_pi} and~\eqref{eqn_restriction_Rq} must hold because $\pi_l$'s are probabilities and $R_q$
 is a covariance matrix.
Recall that
\begin{equation}
\label{eqn_R_q_ij_def}
\left[ R_q \right]_{ij} = \mathrm{E} \left[  q_i q_j  \right] - \mathrm{E} \left[  q_i  \right]\mathrm{E} \left[  q_j  \right]=
\mathrm{Prob} \left(  q_i=1,\, q_j=1  \right) - \pi_i \pi_j.
\end{equation}
To prove the lower bound in~\eqref{eqn_restriction_Rq_ij}, observe that:
\begin{eqnarray}
\mathrm{Prob} \left(  q_i=1,\, q_j=1  \right) &=& \mathrm{Prob} \left(  q_i=1 \right) + \mathrm{Prob} \left(   q_j=1  \right)
 - \mathrm{Prob} \left(  \{q_i=1\} \, \mathrm{or} \, \{q_j=1\}  \right)  \nonumber   \\
 &=& \pi_i + \pi_j - \mathrm{Prob} \left(  \{q_i=1\} \, \mathrm{or} \, \{q_j=1\}  \right)
 \geq \pi_i + \pi_j -1.
  \label{eqn_LB}
 \end{eqnarray}
In view of the fact that $\mathrm{Prob} \left(  q_i=1,\, q_j=1  \right) \geq 0$, eqn.~\eqref{eqn_LB},
and eqn.~\eqref{eqn_R_q_ij_def}, the proof for the lower bound in~\eqref{eqn_restriction_Rq_ij} follows.
The upper bound in~\eqref{eqn_restriction_Rq_ij} holds because $\mathrm{Prob} \left(  q_i=1,\, q_j=1  \right) \leq \pi_i,\,\,i<j$ and eqn.~\eqref{eqn_R_q_ij_def}.
\end{proof}
If we choose a pair $(\pi, R_q)$ that satisfies~\eqref{eqn_restriction_pi},~\eqref{eqn_restriction_Rq},~\eqref{eqn_restriction_Rq_ij}, one cannot guarantee that $\left( \pi, R_q \right)$ is a valid pair, in the sense that there exists a probability distribution on $q$ with its first and second moments being equal to $(\pi,R_q)$, ~\cite{Quadish}. Furthermore, if $(\pi,R_q)$ is given, to simulate binary random variables with the marginal probabilities and correlations equal to $(\pi,R_q)$ is challenging. These questions have been studied, see~\cite{Zucker,Quadish}. We use the results in~\cite{Zucker,Quadish} to generate our correlation models. In particular, we use the result that $\overline{R} = \left[ \overline{R}_{ij}\right]$ (see eqn.~\eqref{eqn_restriction_Rq_ij}) is a valid correlation structure for any $\pi$,~\cite{Zucker}. We simulate the correlated links by the method proposed in~\cite{Quadish}; this method handles a wide range of different correlation structures and has a small computational cost.

\mypar{Link correlation structures}
We consider two different correlation structures for any pair of links $i$ and $j$ in the supergraph:
\begin{eqnarray}
\label{eqn_first_corr_struc}
\left[R_q\right]_{ij} = c_1 \, \overline{R}_{ij}\\
\label{eqn_second_corr_struc}
\left[R_q\right]_{ij} = c_2 \, \theta^ {\kappa_{ij}} \, \overline{R}_{ij}
\end{eqnarray}
where $c_1 \in (0,1]$, $\theta \in (0,1)$ and $c_2 \in (0,1]$ are parameters, and $\kappa_{ij}$ is the distance between links $i$ and $j$ defined as the length of the shortest path that connects them in the supergraph.

The correlation structure~\eqref{eqn_first_corr_struc} assumes that the correlation between any pair of links is a fraction of the maximal possible correlation, for the given $\pi$ (see eqn.~\eqref{eqn_restriction_Rq_ij} to recall $\overline{R}_{ij}$). Reference~\cite{Zucker} constructs a method for generating the correlation structure~\eqref{eqn_first_corr_struc}.

The correlation structure~\eqref{eqn_second_corr_struc} assumes that the correlation between the links decays geometrically with this distance .  In our simulations, we set $\theta=0.95$, and find the maximal $c_2$, such that the resulting correlation structure can be simulated by the method in~\cite{Quadish}. For all the networks that we simulated in the paper, $c_2$ is between $0.09$ and $0.11$.

\mypar{Results} We want to address the following two questions: 1) What is the performance gain ($\Gamma_s$, $\Gamma_m$ in eqns.~\eqref{gamma_s},~\eqref{gamma_m}) of PBW over SGBW and MW;  and  2)  How does this gain scale with the network size, i.e., the number of nodes $N$?

\mypar{Performance gain of PBW over SGBW and MW}
We consider question 1) for both correlation structures~\eqref{eqn_first_corr_struc},~\eqref{eqn_second_corr_struc}. We generate $20$ instantiations of our standard supergraphs (with 100 nodes each and approximately the same average relative degree, equal to $15\%$). Then, for each supergraph, we generate formation probabilities according to rule~\eqref{eqn_P_ij_s}. For each supergraph with the given formation probabilities, we generate two link correlation structures,~\eqref{eqn_first_corr_struc} and~\eqref{eqn_second_corr_struc}.
We evaluate the convergence rate $\phi_j$ given by~\eqref{eqn2_Lemma_MSE_k}, time constants $\eta_j$ given by~\eqref{eqn_eta}, and $\tau_j$, given by~\eqref{eqn_tau}, and the performance gains $\left[\Gamma_s^{\eta}\right]_j$, $\left[\Gamma_m^{\eta}\right]_j$ for each supergraph ($j=1,...,20$). We compute the mean $\overline{\phi}$, the maximum $\phi^{+}$ and the minimum $\phi^{-}$ from the list $\{\phi_j\}$, $j=1,...,20$ (and similarly for $\{\eta_j\}$ and $\{\tau_j\}$, $j=1,...,20$).
Results for the correlation structure~\eqref{eqn_first_corr_struc} are given in Table 1 and for the correlation structure~\eqref{eqn_second_corr_struc}, in Table 2. The performance gains $\Gamma_s$, $\Gamma_m$, for both correlation structures are in Table 3. In addition, Figure 1 depicts the averaged error norm over 100 sample paths. We can see that the PBW outperform the SGBW and the MW
for both correlation structures~\eqref{eqn_first_corr_struc} and~\eqref{eqn_second_corr_struc}. For example, for the correlation~\eqref{eqn_first_corr_struc}, the PBW take less than 40 iterations to achieve $0.2\%$ precision, while the SGBW
 take more than 70, and the MW take more than 80 iterations. For correlation~\eqref{eqn_second_corr_struc}, to achieve $0.2\%$
  precision, the PBW take about 47 iterations, while the SGBW and the MW take more than 90 and 100 iterations, respectively.
\begin{figure}[thpb]
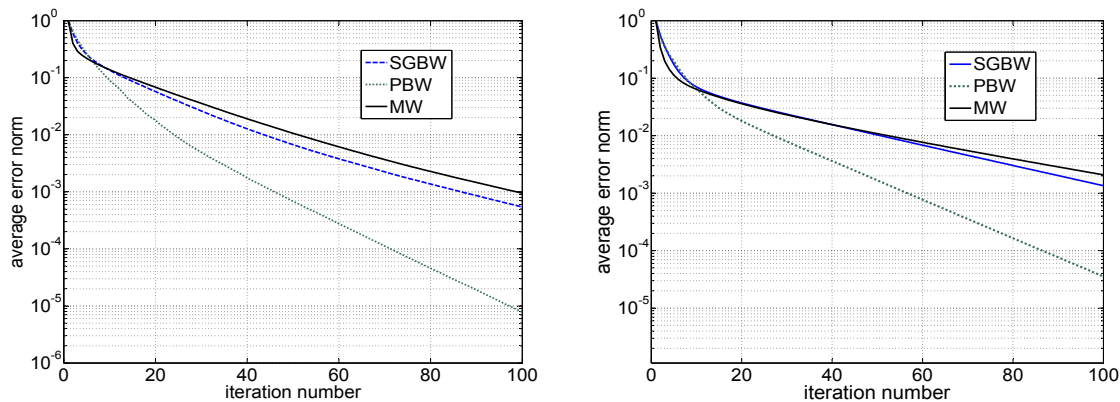

      \centering
      \includegraphics[height=2.2
      in,width=3.0in]{SLIKA3.pdf}
      \includegraphics[height=2.2in,width=3.0in]{SLIKA2.pdf}
%      \caption{Estimated $\mathrm{E}\left( \| e(k)\| \right)$ versus iteration number $k$ for the network given in Figure 1.}
      \label{figure_error_small_net_uncorrel}
      \caption{Average error norm versus iteration number.
      Left: correlation structure~\eqref{eqn_first_corr_struc}; right: correlation structure~\eqref{eqn_second_corr_struc}.}
      \label{Figure_Error_VS_k}
\end{figure}
\newcommand{\ra}[1]{\renewcommand{\arraystretch}{#1}}
\begin{table}
\begin{minipage}[b]{0.5\linewidth}
\centering
\caption{Correlation structure~\eqref{eqn_first_corr_struc}: Average $\overline{(\cdot)}$, maximal ${(\cdot)}^{+}$, and minimal $(\cdot)^{-}$
         values of the MSE convergence rate $\phi$~\eqref{EqnRhoW}, and corresponding time constants $\tau$~\eqref{eqn_tau} and $\eta$~\eqref{eqn_eta}, for 20 generated supergraphs}
         \ra{0.7}
\begin{small}
\begin{tabular}{@{}rrrrrrr@{}}\toprule
%& \multicolumn{3}{c}{$w = 8$} & \phantom{abc}& \multicolumn{3}{c}{$w = 16$} &
%\phantom{abc} & \multicolumn{3}{c}{$w = 32$}\\
%\cmidrule{2-4} \cmidrule{6-8} \cmidrule{10-12}
\, & SGBW & PBW & MW\\
\hline
\vspace*{-.cm}\\
%SUBOPT\\
%$n$ $\,\,\,\,\,$\\
$\overline{\phi}\,\,\,\,$ $\,\,\,\,\,$ & 0.91 & 0.87 &  \\
$\phi^{+}$ $\,\,\,\,\,$ & 0.95 & 0.92 &  \\
$\phi^{-}$ $\,\,\,\,\,$ & 0.89 & 0.83 &  \\
\hline
$\overline{\tau}\,\, \,\,$ $\,\,\,\,\,$& 22.7 & 15.4 &  \\
$\tau^{+} $ $\,\,\,\,\,$& 28 & 19 &  \\
$\tau^{-} $ $\,\,\,\,\,$&  20 & 14 & \\
\hline
%$\overline{\eta} \,\,\,\,$ $\,\,\,\,\,$& 14 & 7 & 15\\
%$\eta^{+} $ $\,\,\,\,\,$& 14 & 7 & 15\\
%$\eta^{-} $ $\,\,\,\,\,$& 14 & 7 & 15\\
%\hline
$\overline{\eta} \,\, \,\, $ $\,\,\,\,\,$&    20   &   13   &   29\\
$\eta^{+} $ $\,\,\,\,\,$&           25   &   16  &    38\\
$\eta^{-} $ $\,\,\,\,\,$&           19   &   12   &   27\\
\hline
%$\overline{\Gamma}_{\eta} $ $\,\,\,\,\,$& 14 & 7 & 15\\
%$\Gamma^{+}_{\eta} $ $\,\,\,\,\,$& 14 & 7 & 15\\
%$\Gamma^{-}_{\eta} $ $\,\,\,\,\,$& 14 & 7 & 15\\
%$\,$50 $\,\,\,\,\,$& 1.0875 & 2.3586 & 3.3338 & 5.2191 & 6.1373  \\
\bottomrule
\end{tabular}
\label{Table_1_Small_net}
\end{small}
%%
%%
%%
%\begin{table}
\end{minipage}
\hspace{-0.99cm}
\begin{minipage}[b]{0.5\linewidth}
\centering
\caption{ Correlation structure~\eqref{eqn_second_corr_struc}: Average $\overline{(\cdot)}$, maximal ${(\cdot)}^{+}$, and minimal $(\cdot)^{-}$
         values of the MSE convergence rate $\phi$~\eqref{EqnRhoW}, and corresponding time constants $\tau$~\eqref{eqn_tau} and $\eta$~\eqref{eqn_eta}, for 20 generated supergraphs }
\ra{0.7}
\begin{small}
\begin{tabular}{@{}rrrrrrr@{}}\toprule
%& \multicolumn{3}{c}{$w = 8$} & \phantom{abc}& \multicolumn{3}{c}{$w = 16$} &
%\phantom{abc} & \multicolumn{3}{c}{$w = 32$}\\
%\cmidrule{2-4} \cmidrule{6-8} \cmidrule{10-12}
\, & SGBW & PBW & MW\\
\hline
\vspace*{-.06cm}\\
%SUBOPT\\
%$n$ $\,\,\,\,\,$\\
$\overline{\phi}\,\,\,\,$ $\,\,\,\,\,$ & 0.92 & 0.86 &  \\
$\phi^{+}$ $\,\,\,\,\,$ & 0.94 & 0.90 &  \\
$\phi^{-}$ $\,\,\,\,\,$ & 0.91 & 0.84 &  \\
\hline
$\overline{\tau}\,\, \,\,$ $\,\,\,\,\,$& 25.5 & 14.3 &  \\
$\tau^{+} $ $\,\,\,\,\,$& 34 & 19 &  \\
$\tau^{-} $ $\,\,\,\,\,$&  21 & 12 & \\
\hline
%$\overline{\eta} \,\,\,\,$ $\,\,\,\,\,$& 14 & 7 & 15\\
%$\eta^{+} $ $\,\,\,\,\,$& 14 & 7 & 15\\
%$\eta^{-} $ $\,\,\,\,\,$& 14 & 7 & 15\\
%\hline
$\overline{\eta}\,\, \,\, $ $\,\,\,\,\,$&    20   &   11.5   &    24.4 \\
$\eta^{+} $ $\,\,\,\,\,$&                    23   &   14      &    29 \\
$\eta^{-} $ $\,\,\,\,\,$&                    16   &   9       &    19  \\
\hline
%$\overline{\Gamma}_{\eta} $ $\,\,\,\,\,$& 14 & 7 & 15\\
%$\Gamma^{+}_{\eta} $ $\,\,\,\,\,$& 14 & 7 & 15\\
%$\Gamma^{-}_{\eta} $ $\,\,\,\,\,$& 14 & 7 & 15\\
%$\,$50 $\,\,\,\,\,$& 1.0875 & 2.3586 & 3.3338 & 5.2191 & 6.1373  \\
\bottomrule
\end{tabular}
\label{Table_2}
\end{small}
%\end{table}
\end{minipage}
\hspace{0.1cm}
\begin{minipage}[b]{0.5\linewidth}
\centering
\caption{Average $\overline{(\cdot)}$, maximal $({\cdot})^{+}$, and minimal $(\cdot)^{-}$
          performance gains $\Gamma_s^{\eta}$ and $\Gamma_m^{\eta}$~\eqref{eqn_gamma_s_m_eta}
 for the two correlation structures~\eqref{eqn_first_corr_struc} and~\eqref{eqn_second_corr_struc} for 20 generated supergraphs}
\ra{0.7}
\begin{small}
\begin{tabular}{@{}rrrrrrr@{}}\toprule
%& \multicolumn{3}{c}{$w = 8$} & \phantom{abc}& \multicolumn{3}{c}{$w = 16$} &
%\phantom{abc} & \multicolumn{3}{c}{$w = 32$}\\
%\cmidrule{2-4} \cmidrule{6-8} \cmidrule{10-12}
\, & Correlation~\eqref{eqn_first_corr_struc} & Correlation~\eqref{eqn_second_corr_struc} \\
\hline
\vspace*{-.06cm}\\
%SUBOPT\\
%$n$ $\,\,\,\,\,$\\
$\overline{({\Gamma}_s^{\eta})}$ $\,\,\,\,\,$& 1.54 &
1.73  \\
$(\Gamma_s^{\eta})^{+} $ $\,\,\,\,\,$& 1.66 & 1.91 \\
$(\Gamma_s^{\eta})^{-} $ $\,\,\,\,\,$&  1.46 & 1.58 \\
\hline
$\overline{({\Gamma}_m^{eta})}$ $\,\,\,\,\,$ & 2.22 & 2.11  \\
$(\Gamma_m^{\eta})^{+}$ $\,\,\,\,\,$ & 2.42 & 2.45 \\
$(\Gamma_m^{\eta})^{-}$ $\,\,\,\,\,$ & 2.07 & 1.92 \\
\hline%\hline
%$\overline{\eta}\,\, \,\, $ $\,\,\,\,\,$&    1.5272   &   \,   &   \,\\
%$\eta^{+} $ $\,\,\,\,\,$&           1.5795   &   \,  &    \,\\
%$\eta^{-} $ $\,\,\,\,\,$&           1.4957   &   \,   &   \,\\
\hline
%$\overline{\Gamma}_{\eta} $ $\,\,\,\,\,$& 14 & 7 & 15\\
%$\Gamma^{+}_{\eta} $ $\,\,\,\,\,$& 14 & 7 & 15\\
%$\Gamma^{-}_{\eta} $ $\,\,\,\,\,$& 14 & 7 & 15\\
%$\,$50 $\,\,\,\,\,$& 1.0875 & 2.3586 & 3.3338 & 5.2191 & 6.1373  \\
\bottomrule
\end{tabular}
\label{Table_2}
\end{small}
%\end{table}
\end{minipage}
\end{table}
%
%This information is valuable for modeling, since even if our model does not perfectly match with actual network, we can very well predict the behavior of actual network.
The average performance gain of PBW over MW is larger than the performance gain over SGBW, for both~\eqref{eqn_first_corr_struc} and~\eqref{eqn_second_corr_struc}. The gain over SGBW, $\Gamma_s$, is significant, being 1.54 for~\eqref{eqn_first_corr_struc} and 1.73 for~\eqref{eqn_second_corr_struc}.
The gain with the correlation structure~\eqref{eqn_second_corr_struc} is larger than the gain with~\eqref{eqn_first_corr_struc}, suggesting that larger gain over SGBW is achieved with smaller correlations. This is intuitive, since large positive correlations imply that the random links tend to occur simultaneously, i.e., in a certain sense random network realizations are more similar to the underlying supergraph.

Notice that the networks with $R_q$ as in~\eqref{eqn_second_corr_struc} achieve faster rate than for~\eqref{eqn_first_corr_struc} (having at the same time similar supergraphs and formation probabilities). This is in accordance with the analytical studies in section~\ref{analytical_study} that suggest that faster rates can be achieved for smaller (or negative correlations) if $G$ and $\pi$ are fixed.

\mypar{Performance gain of PBW over SGBW as a function of the network size}
To answer question 2), we generate the supergraphs with $N$ ranging from $30$ up to $160$, keeping the average relative degree of the supergraph approximately the same ($15\%$). Again, PBW performs better than MW ($\tau_{\mathrm{SGBW}} < 0.85 \tau_{\mathrm{MW}}$), so we focus on the dependence of $\Gamma_s$ on $N$, since it is more critical.

Figure~\ref{figure_SGBW_PBW} plots $\Gamma_s$ versus $N$, for the two correlation structures.
The gain $\Gamma_s$ increases with $N$ for both~\eqref{eqn_second_corr_struc} and~\eqref{eqn_first_corr_struc}.
\begin{figure}[thpb]
      \centering
      \includegraphics[scale=0.25 ]{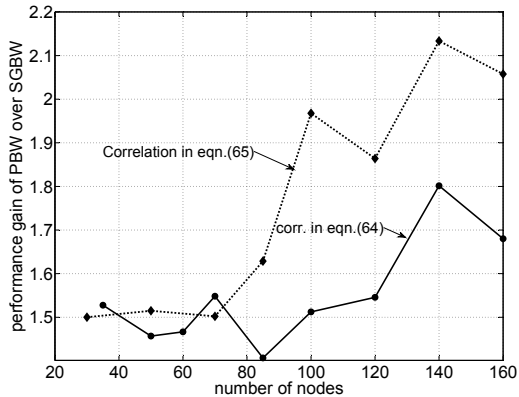}
      \caption{Performance gain of PBW over SGBW ($\Gamma_s^{\eta}$, eqn.~\eqref{eqn_gamma_s_m_eta})
      as a function of the number of nodes in the network.}
      \label{figure_SGBW_PBW}
\end{figure}
\subsection{Broadcast gossip algorithm~\cite{scaglione}: Asymmetric random links
}
\label{subsect_broadcast_gossip}

In the previous section, we demonstrated the effectiveness of our approach in networks with random symmetric link failures. This section demonstrates the validity of our approach in randomized protocols with asymmetric links.
We study the broadcast gossip algorithm~\cite{scaglione}. Although
the optimization problem~\eqref{Eqn_asym_optimization_problem} is convex for generic spatially correlated directed random links,
 we pursue here numerical optimization of the broadcast gossip algorithm proposed in~\cite{scaglione}, where, at each time step, node $i$
 is selected at random, with probability $1/N$. Node $i$ then broadcasts its state to all its neighbors within its
 wireless range. The neighbors then update their state by performing
  the weighted average of the received state with their own state. The nodes outside the set $\Omega_i$ and the node $i$
   itself keep their previous state unchanged. The broadcast gossip algorithm is well suited for WSN applications, since
it exploits the broadcast nature of wireless media and avoids bidirectional communication~\cite{scaglione}.

Reference~\cite{scaglione} shows that, in broadcast gossiping, all the nodes converge a.s. to a common random value $c$ with  mean $x_{\mbox{\scriptsize{avg}}}$ and bounded mean squared error.
Reference~\cite{scaglione} studies the case when the weights $W_{ij}=g$, $\forall (i,j) \in E$ and finds the optimal $g=g^{*}$ that
 optimizes the mean square deviation MSdev (see eqn.~\eqref{Eqn_asym_optimization_problem}).
We optimize the same objective function (see eqn.~\eqref{Eqn_asym_optimization_problem}) as in~\cite{scaglione}, but allowing different weights for different directed links. We detail on the numerical optimization for the broadcast gossip in the Appendix C.
We consider again the supergraph $G$ from our standard experiment with $N=100$ and average degree $15\%N$. For the broadcast gossip, we compare the performance of PBW with 1) the optimal equal weights in~\cite{scaglione} with $W_{ij}=g^{*}$, $(i,j) \in E$; 2) broadcast gossip with $W_{ij}=0.5$, $(i,j) \in E$.

Figure~3 (left) plots the consensus mean square deviation $\mathrm{MSdev}$ for the 3 different weight choices. The decay of MSdev is much faster for the PBW than for $W_{ij}=0.5$, $\forall \,(i,j)$ and $W_{ij}=g^{*}$, $\forall \,(i,j)$. For example, the MSdev  falls below $10\%$ after 260 iterations for PBW (i.e., 260 broadcast transmissions); broadcast gossip with $W_{ij}=g^{*}$ and $W_{ij}=0.5$ take 420 transmissions to achieve the same precision. This is to be expected, since PBW has many moredegrees of freedom for to optimize than the broadcast gossip in~\cite{scaglione} with all equal weights $W_{ij}=g^{*}$.
Figure 3 (right) plots the MSE, i.e., the deviation of the true average $x_{\mbox{\scriptsize{avg}}}$, for the three weight choices. PBW shows faster decay of
MSE than the broadcast gossip with $W_{ij}=g^{*}$ and $W_{ij}=0.5$.
\begin{figure}[thpb]
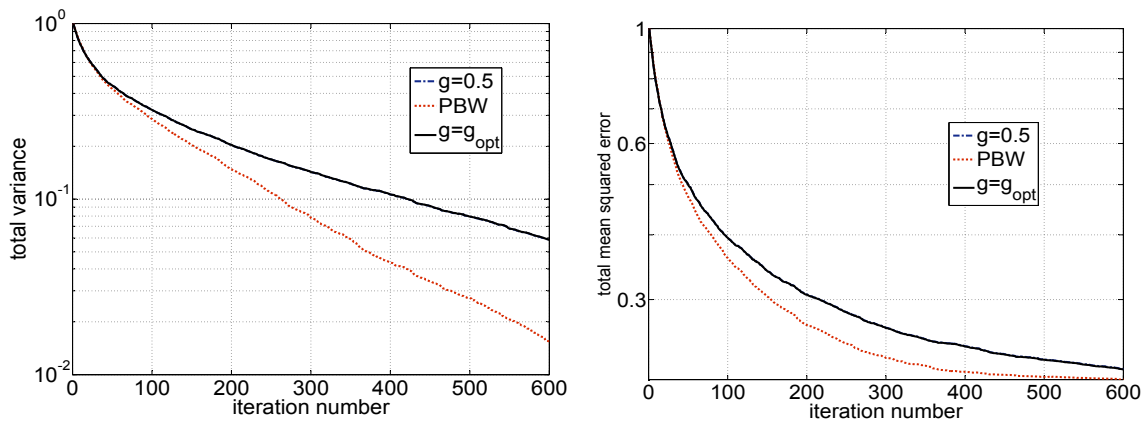

      \centering
      \includegraphics[height=2.2
      in,width=3.0in]{SLIKA_BROADCAST_GOSSIP_VAR.pdf}
      \includegraphics[height=2.14in,width=2.93in]{SLIKA_BROADCAST_GOSSIP_MSE_NEW.pdf}
%      \caption{Estimated $\mathrm{E}\left( \| e(k)\| \right)$ versus iteration number $k$ for the network given in Figure 1.}
      \label{FigureBroadcastGossipVar}
      \caption{Broadcast gossip algorithm with different weight choices. Left: total variance; right: total mean squared error}
      \label{FigureBroadCastGossipMSE}
\end{figure}
The weights provided by PBW are different among themselves, varying from 0.3 to 0.95. The weights $W_{ij}$ and $W_{ji}$ are
 also different, where the maximal difference between $W_{ij}$ and $W_{ji}$, $(i,j) \in E$, is 0.6. Thus, in the case of directed random networks, asymmetric matrix $W$ results in faster
 convergence rate.

\section{Conclusion}
\label{ConclusionSection}
In this paper, we studied the optimization of the weights for the consensus algorithm under random topology and spatially correlated links. We considered both networks with random link failures and randomized algorithms; from the weights optimization point of view, both fit into the same framework. We showed that, for symmetric random links, optimizing the MSE convergence rate is a convex optimization problem, and , for asymmetric links, optimizing the mean squared deviation from the current average state is also a convex optimization problem. We illustrated with simulations that the probability based weights (PBW) outperform previously proposed weights strategies that do not use the statistics of the network randomness. The simulations also show that, using the link quality estimates and the link correlations for designing the weights significantly improves the convergence speed, typically reducing the time to consensus by one third to a half, compared to choices previously proposed in the literature.
%\section{Conclusion}
%\label{ConclusionSection}
\appendices

\section{Proof of Lemma~\ref{LemmaStatisticsOfW} (a sketch)}
\label{AppendixSection}
%\subsection{Proof of Lemma~\ref{LemmaStatisticsOfW} (a sketch)}
%\newtheorem{proof}{Proof}
%\begin{proof}(a sketch)
Eqn.~(\ref{eqn:meancalW}) follows from the expectation of~(\ref{WEIGHTMATRIXCOMPACT}). To prove the remaining of the Lemma, we find $\mathcal{W}^2$, $\overline{\rule{0pt}{9pt}W}^2$, and the expectation $\mathcal{W}^2$. We obtain successively:
\begin{eqnarray*}
\mathcal{W}^2 & = & (\,\,W \odot \mathcal{A} + I - \mathrm{diag}(W\,\mathcal{A})\,\,)^2 \\
&=&       (\,W \odot \mathcal{A}\,)  ^2  +     \,\mathrm{diag}^2(W\,\mathcal{A})\,
+  I                   +  2 \,W \odot \mathcal{A}    -2 \,\mathrm{diag}(W\,\mathcal{A}) -  \, (\,W \odot \mathcal{A}\,)\,\, \mathrm{diag}(W\,\mathcal{A})\\
&-&
\, \,\mathrm{diag}(W\,\mathcal{A})  \, (\,W \odot \mathcal{A}\,)\\
\overline{\rule{0pt}{9pt}W}^2    &  =  &   (\,W \odot P\,)  ^2  +
\,\mathrm{diag}^2(W\,P)\,
+  I
         +2 \,W \odot P    -2
\,\mathrm{diag}\left( W\,P \right) \\
& \,&-\left[(\,W \odot P\,)\,\, \mathrm{diag}\left(W\,P\right)\,+
\,\mathrm{diag}\left(W\,P\right)  \, (\,W \odot P\,)\right]
\\
%\end{eqnarray*}
%The expectation of $ \mathcal{W}^2 $ and $\overline{\rule{0pt}{9pt}W}^2$ are
%\begin{eqnarray*}
 \mathrm{E}\left[\,\mathcal{W}^2\,\right] & = &  \mathrm{E}\left[ (\,W \odot \mathcal{A}\,)  ^2 \right] +    \mathrm{E}\left[
\,\mathrm{diag}^2(W\,\mathcal{A})\right]
+  I         +2 \,W \odot P    \\
& \, & -2 \,\mathrm{diag}\left( W\,P \right) -
 \mathrm{E}[\, (\,W \odot \mathcal{A}\,)\, \mathrm{diag}(W\,\mathcal{A})\,+
\,\mathrm{diag}(W\,\mathcal{A})  \, (\,W \odot \mathcal{A}\,) \,]
%\\
%  &=&\overline{\rule{0pt}{9pt}W}^2 + {W_C}^T\,\,\{\,{R_A}\odot (  I \otimes 11^T  +   11^T \otimes I  -  B)\, \}\,\,{W_C}
%\end{eqnarray*}
%\begin{eqnarray*}
\end{eqnarray*}
We will next show the following three equalities:
\begin{eqnarray}
\label{eqn1_proof_lemma_1}
 \mathrm{E}\left[(W \odot \mathcal{A})^2\right]\hspace{-.3cm}&=&\hspace{-.3cm}
 (W \odot P)  ^2 + {W_C}^T\left\{{R_A}\odot(11^T\otimes I) \right\}{W_C}  \\
\label{eqn2_proof_lemma_1}
\mathrm{E}\left[\mathrm{diag}^2\left(W\mathcal{A}\right)\right] \hspace{-.3cm}&=&\hspace{-.3cm}
 \mathrm{diag}^2(W\,P)  +  {W_C}^T\left\{{R_A}\odot (I\otimes 11^T) \right\}{W_C}  \\
 \label{eqn3_proof_lemma_1}
\mathrm{E}\left[(W \odot \mathcal{A})
\mathrm{diag}\left(W\mathcal{A}\right)+
\mathrm{diag}\left(W\mathcal{A} \right)(W \odot \mathcal{A}) \right]
\hspace{-.3cm}&=&\hspace{-.3cm}\\
(W \odot P)\mathrm{diag}\left( W P \right)\hspace{-.3cm}&+&\hspace{-.3cm}
\mathrm{diag}\left( W P \right)(W \odot P) -
{W_C}^T \left\{{R_A}\odot B \right\}{W_C}  \nonumber
\end{eqnarray}
%
%Then, comparing the expressions for $\overline{\rule{0pt}{9pt}W}^2$ and $\mathrm{E}[\,\mathcal{W}^2\,]$, we conclude that:
%$$  \mathrm{E}[\,\mathcal{W}^2\,] =  \overline{\rule{0pt}{9pt}W}^2 + {W_C}^T\,\,\{\,{R_A}\odot (  I \otimes 11^T  +   11^T \otimes I  -  B)\, \}\,\,{W_C}$$
First, consider~(\ref{eqn1_proof_lemma_1}) and find $\mathrm{E}\left[\left(\,W \odot \mathcal{A}\,\right)^2\right]$. Algebraic manipulations allow to write $(\,W \odot \mathcal{A}\,)^2$ as follows:
\begin{equation}
\label{eqn_proof_lemma1}
\left(\,W \odot \mathcal{A}\,\right)^2 = {W_C}^T \left\{  \mathcal{A}_2  \odot (\, 11^T \otimes I \,)  \right\}  {W_C}, \:\:
%\end{equation}
\mathcal{A}_2  = \mathrm{Vec}(\, \mathcal{A}\,)  \mathrm{Vec}
^T(\, \mathcal{A}\,)
\end{equation}
%
%$$\mathcal{A}  \odot (\, 11^T \otimes I \,)=$$
%$$
%\begin{bmatrix}
%  \mathrm{diag}(\, A_1(k)\odot A_1(k) \,) & \mathrm{diag}(\, A_1(k)\odot
%A_2(k) \,) & \mathrm{...} &
%\, & \mathrm{diag}(\, A_1(k)\odot A_N(k) \,)\\
%  \mathrm{...} & \, & \, & \, & \mathrm{...}\\
%  \mathrm{...} &\, &\, &\, &\mathrm{...}\\
%  \mathrm{...} &\, &\, &\, &\mathrm{...}\\
%  \mathrm{diag}(\, A_N(k)\odot A_1(k) \,) &\mathrm{diag}(\, A_N(k)\odot
%A_2(k)\,)  &\mathrm{...} &\,
%&\mathrm{diag}(\, A_N(k)\odot A_N(k) \,)
%\end{bmatrix}
%$$
%
%
To compute the expectation of~(\ref{eqn_proof_lemma1}),
%$$ \mathrm{E} \left[\,  (\,W \odot  \mathcal{A}\,)^2   \,\right] = {W_C}^T \left\{ \mathrm{E}  \left[\,\,\mathcal{A}_2  \,\,\right] \odot (\,
%11^T \otimes I \,)\right\} \,{W_C} $$
 we need $\mathrm{E}  \left[\,\mathcal{A}_2  \,\right]$ that can be written as
\[
\mathrm{E}  \left[\,\mathcal{A}_2  \,\right]=P_2 + R_A,\:\:\mbox{with}\:\:
P_2  =  \,\, \mathrm {Vec}(\,P\, ) \,\,
 \mathrm{Vec}^T(\,P\,).
 \]
%In uncorrelated case, we have that ${R_A} = \mathrm{diag}({R_A})$, which implies also that $\mathrm{diag}({R_A}) \odot (\, 11^T \otimes I \,)= \mathrm{diag}({R_A})$.
%Therefore, we have that:
%
%$$ \mathrm{E} [\,\,  (\,W \odot A(k)\,)^2   \,\,]
%= {W_C}^T (\,\,\hat{ \mathcal{A} } \,\, \odot (\, 11^T
%\otimes I \,)) \,{W_C}
%+  {W_C}^T (\,\,\mathrm{diag} (R_A)\,\,)
%\,{W_C}    =      $$
%
%$$   {W_C}^T (\,\,\hat{ \mathcal{A} }  \,\, \odot (\, 11^T
%\otimes I \,)) \,{W_C} +
%\begin{bmatrix}
%  \sum_{j=2}^N C_{1j}(W_{1j})^2 & 0 & \mathrm{...} &
%\, & 0\\
%  \mathrm{...} & \, & \, & \, & \mathrm{...}\\
%  \mathrm{...} &\, &\, &\, &\mathrm{...}\\
%  \mathrm{...} &\, &\, &\, &\mathrm{...}\\
%  0  &\mathrm{...} &0&\,
%&\sum_{j=1}^{N-1} C_{Nj}(W_{Nj})^2
%\end{bmatrix}
% $$
Equation~(\ref{eqn1_proof_lemma_1}) follows, realizing that %similarly as in~(\ref{eqn_proof_lemma1}), we have
\[
{W_C}^T \left\{  P_2 \odot (\,
11^T \otimes I \,)\right\} \,{W_C} = (\,W \odot P\,)  ^2.
\]
Now consider~(\ref{eqn2_proof_lemma_1}) and~(\ref{eqn3_proof_lemma_1}). After algebraic manipulations, it can be shown that
 \begin{eqnarray*}
 \mathrm{diag}^2\left( W\,\mathcal{A} \right)   &=&  {W_C}^T \left\{ \mathcal{A}_2  \odot (\, I \otimes 11^T \,)\right\}{W_C} \\
(\,W \odot \mathcal{A}\,)\,\, \mathrm{diag}\left( W\,\mathcal{A} \right)\,+ \,\mathrm{diag}\left(W\,\mathcal{A}\right)  \, (\,W \odot \mathcal{A}\,) &=&   {W_C}^T \left\{ \mathcal{A}_2  \odot B\right\}{W_C}
\end{eqnarray*}
Computing the expectations in the last two equations leads to eqn.~(\ref{eqn2_proof_lemma_1})  and eqn.~(\ref{eqn3_proof_lemma_1}).

Using equalities~(\ref{eqn1_proof_lemma_1}), (\ref{eqn2_proof_lemma_1}), and~(\ref{eqn3_proof_lemma_1}) and comparing the expressions for $\overline{\rule{0pt}{9pt}W}^2$ and $\mathrm{E}[\,\mathcal{W}^2\,]$ leads to:
  \begin{equation}
  \label{eqnapp:RC}
  R_C=\mathrm{E}[\,\mathcal{W}^2\,] -  \overline{\rule{0pt}{9pt}W}^2=
   {W_C}^T\,\,\{\,\,{R_A}\odot (  I \otimes 11^T  +   11^T \otimes I  -  B) \}\,\,{W_C}
  \end{equation}
This completes the proof of Lemma~\ref{LemmaStatisticsOfW}.

\section{Subgradient step calculation for the case of spatially correlated links}
To compute the subgradient~$H$, from eqns.~(\ref{eq:PBW}) and~(\ref{eq:PBW-b}) we consider the computation of $\mathrm{E}\left[\mathcal{W}^2 - J\right]=\overline{\rule{0pt}{9pt}W}^2-J+R_C$. Matrix $\overline{\rule{0pt}{9pt}W}^2-J$ is computed in the same way as for the uncorrelated case. To compute $R_C$, from~(\ref{eqnapp:RC}),
partition the matrix $R_A$ into $N\times N$ blocks:
{\small
\begin{equation*}
R_A =  \left( \begin{array}{cccc}
R_{11} & R_{12} & \ldots & R_{1N}  \\
R_{21} & R_{22} & \ldots & R_{2N}  \\
\vdots & \ldots & \ldots & \vdots  \\
R_{N1} & R_{N2} & \ldots & R_{NN}
\end{array} \right)
\end{equation*}
}
Denote by $d_{ij}$, by $c_{ij}^l$, and by $r_{ij}^l$ the diagonal,  the $l$-th column, and the $l$-th row  of the block $R_{ij}$.
%Computation of $Z(w) = \overline{\rule{0pt}{9pt}W}^2-J+R_C$ requires $\mathcal{O}(N^3)$ operations. It is easy to see that computation of $\overline{\rule{0pt}{9pt}W}^2-J$ requires less than $\mathcal{O}(N^3)$ computations. Critical term is $R_C=W_C^T \{R_A}\odot (\, I
%\otimes  11^T\,\,\,        + \, 11^T \otimes  I \,\,\,    -
%B)\}\W_C$.
It can be shown that the matrix $R_C$ can be computed as follows:
\begin{eqnarray*}
 \left[ R_C \right]_{ij} & = & W_i^T \left(   d_{ij} \odot W_j   \right)  -  W_{ij} \left(     W_i^T c_{ij}^i + W_j^T r_{ij}^j   \right),\,\,i \neq j\\
  \left[ R_C \right]_{ii} & = & W_i^T \left(   d_{ii} \odot W_i   \right)  +  W_i^T R_{ii} W_i
\end{eqnarray*}

Denote by $R_A(:, k)$ the $k$-th column of the matrix $R_A$ and by
\begin{eqnarray*}
k_1 & = &  \left(  e_j^T \otimes I_N \right) R_A(:, (i-1)N+j),\,\,k_2  =   \left(  e_i^T \otimes I_N \right) R_A(:, (j-1)N+i) , \\
k_3 & = & \left(  e_i^T \otimes I_N \right) R_A(:, (i-1)N+j),\,\,
k_4  =  \left(  e_j^T \otimes I_N \right) R_A(:, (j-1)N+i).
\end{eqnarray*}
Quantities $k_1$, $k_2$, $k_3$ and $k_4$ depend on $(i,j)$ but for the sake of the notation simplicity indexes are omitted.
It can be shown that the computation of $H_{ij}$, $(i,j) \in E$ boils down to:
\begin{equation*}
H_{ij} = 2 \,u_i^2 \,W_i^T \, c_{ii}^j  +  2 \, u_j^2 \, W_j^T \, c_{jj}^i + 2 \, u_i  \, W_j^T \,  (u \odot k_1) \, + \, 2 \, u_j \,  W_i^T\,   (u \odot k_2) \, -2\,  u_i \, u_j \, W_j^T \, c_{ji}^j - $$ $$2 \, u_i\, u_j \, W_i^T \, c_{ij}^i     -2 \, u_i \, W_i^T \,  (u \odot k_3)\,  - 2 \, u_j \, W_j^T  \, (u \odot k_4) +2 \, {P}_{ij} \, (u_i-u_j) \, u^T \left(  \overline{\rule{0pt}{9pt}W}_j \, - \, \overline{\rule{0pt}{9pt}W}_i  \, \right)
%\label{eqn_g_l}
\end{equation*}
\section{Numerical optimization for the broadcast gossip algorithm}
With broadcast gossip, the matrix $\mathcal{W}(k)$ can take $N$ different realizations, corresponding to the broadcast cycles of each of the $N$ sensors. We denote these realizations by $\mathcal{W}^{(i)}$, where $i$ indexes the broadcasting node. We can write the random realization of the broadcast gossip matrix $\mathcal{W}^{(i)}$, $i=1,...,N$, as follows:
\begin{equation}
\mathcal{W}^{(i)}(k) = W \odot \mathcal{A}^{(i)}(k) +I - \mathrm{diag} \left(  W \, \mathcal{A}^{(i)}(k) \right),
\end{equation}
where $\mathcal{A}^{(i)}_{li}(k) = 1$, if $l \in \Omega_i$. Other entries of $\mathcal{A}^{(i)}(k)$ are zero.

Similarly in Appendix A, we can arrive at the expressions for $\mathrm{E} \left[ \mathcal{W}^T \mathcal{W} \right]:=\mathrm{E} \left[ \mathcal{W}^T(k) \mathcal{W}(k)  \right]$ and for
$\mathrm{E} \left[ \mathcal{W}^T J \mathcal{W}  \right]:=\mathrm{E} \left[ \mathcal{W}^T(k) J \mathcal{W}(k)  \right]$, for all $k$. We remark that the matrix $W$ needs not to be symmetric for the broadcast gossip and that $W_{ij}=0$, if $(i,j) \notin E$.
\begin{eqnarray*}
\mathrm{E}\, \left[ \left( \mathcal{W} ^T \mathcal{W }\right)_{ii} \right]&=& \frac{1}{N} \sum_{l=1, l\neq i}^N W_{li}^2
+   \frac{1}{N}  \sum_{l=1, l \neq i}^N (1 - W_{il})^2\\
\mathrm{E}\, \left[ \left( \mathcal{W} ^T \mathcal{W }\right)_{ij} \right]&=& \frac{1}{N} W_{ij} (1-W_{ij}) + \frac{1}{N} W_{ji} (1 - W_{ji}),\,\, i\neq j \\
\mathrm{E} \left[ \left( \mathcal{W}^T J \mathcal{W} \right)_{ii}\right] & = &  \frac{1}{N^2} \left(    1 + \sum_{l \neq i}W_{li}    \right)^2 + \frac{1}{N^2}
\sum_{l=1,l\neq i}^N (1-W_{il})^2  \\
\left[ \mathrm{E}  \left[ \mathcal{W}^T J \mathcal{W}  \right)_{ij} \right] &=& \frac{1}{N^2} (1-W_{ji})
(1 + \sum_{l=1,l\neq i}^N W_{li}) + \frac{1}{N^2}  (1-W_{ij}) (1 + \sum_{l=1,l \neq j }^N W_{lj}) \\
&+&
\frac{1}{N^2} \sum_{l=1,l \neq i, l\neq j}^N (1-W_{il}) (1-W_{jl}), \,\,i \neq j
\end{eqnarray*}

Denote by $W^{\mathrm {BG}}:=\mathrm{E} \left[ \mathcal{W}^T \mathcal{W}  \right] - \mathrm{E} \left[ \mathcal{W}^T J \mathcal{W}  \right]$ and recall the definition of the MSdev rate $\psi(W)$~\eqref{PSI_FUNCTION}. We have that $\psi(W)=\lambda_{\mathrm{max}}\left(
W^{\mathrm {BG}}    \right)$. We proceed with the calculation of the subgradient of $\psi(W)$ similarly as in subsection IV-E.
The partial derivative of the cost function $\psi(W)$ with respect to weight $W_{i,j}$ is given by:
$$
\frac {\partial} {\partial W_{i,j}} \lambda_\mathrm {max} \left(W^{\mathrm {BG}} \right)= q^T \left( \frac {\partial} {\partial W_{i,j}} W^{\mathrm {BG}} \right)q$$ where $q$ is eigenvector associated with the maximal eigenvalue of the matrix $W^{\mathrm {BG}}$.
Finally, partial derivatives of the entries of the matrix $W^{\mathrm {BG}}$ with respect to weight $W_{i,j}$ are given by the following set of equations:
\begin{eqnarray*}
\frac {\partial} {\partial W_{i,j}} W^{\mathrm {BG}}_{i,i}&=& - 2 \frac{N-1}{N}(1-W_{i,j})\\
\frac {\partial} {\partial W_{i,j}} W^{\mathrm {BG}}_{j,j}&=& \frac{2}{N}W_{i,j} -\frac{2}{N} (1-\sum_{l=1,l\neq j}^N W_{l,j})\\
\frac {\partial} {\partial W_{i,j}} W^{\mathrm {BG}}_{i,j}&=& \frac{1}{N}(1-2 W_{i,j})  -   \frac{1}{N^2} (-1-\sum_{l=1,l\neq j}^N W_{l,j}-W_{i,j})\\
\frac {\partial} {\partial W_{i,j}} W^{\mathrm {BG}}_{i,l}&=& \frac{1}{N^2}(1-W_{l,j}), \, l\neq i, l\neq j \\
\frac {\partial} {\partial W_{i,j}} W^{\mathrm {BG}}_{i,j}&=& -\frac{1}{N^2}(1-W_{l,j}), \, l\neq i, l\neq j\\
\frac {\partial} {\partial W_{l,m}} W^{\mathrm {BG}}_{i,j}&=& 0, \,\, otherwise.
\end{eqnarray*}
%
%
%
%
%
%
%
%Differently then in~\cite{scaglione}, we assume that the weights for different links in broadcast gossip can be different.
%Recall the supergraph neighborhood set and the random neighborhood set,
%$O_i$ and $\Omega_i(k)$, respectively. For the broadcast gossip algorithm, consensus equation~\eqref{Eqn_consensus_ij} still holds.
% In particular, according to broadcast gossip protocol, we also have that, if node $i$ is broadcasting at time $k$, $\Omega_j(k) = \{ i \}$, for some $i$, and $\Omega_i(k) = \emptyset $.
%
%
%Recall the expression for MSdev convergence rate $\psi(W)$ in~\eqref{eqn_zeta}. In order to compute
% $\psi(W)$, we must compute $\mathrm{E} \left[ \mathcal{W}^T \mathcal{W} \right]$ and $\mathrm{E} \left[ \mathcal{W}^T\,J\,\mathcal{W} \right]$.
%

\bibliographystyle{IEEEtran}
\bibliography{IEEEabrv,Bibliography_Dusan_30_April}

\end{document}